\newcommand{\real}{{\mathbb R}}
\newcommand{\rmcor}{\mathrm{Cor\,}}
\newcommand{\rmre}{\mathrm{Re\,}}
\newcommand{\ascript}{\mathcal{A}}
\newcommand{\bscript}{\mathcal{B}}
\newcommand{\cscript}{\mathcal{C}}
\newcommand{\lscript}{\mathcal{L}}
\newcommand{\atilde}{\widetilde{\ascript}}
\newcommand{\btilde}{\widetilde{\bscript}}
\newcommand{\ctilde}{\widetilde{\cscript}}
\newcommand{\ab}[1]{\left|#1\right|}
\newcommand{\brac}[1]{\left\{#1\right\}}
\newcommand{\paren}[1]{\left(#1\right)}
\newcommand{\sqbrac}[1]{\left[#1\right]}
\newcommand{\elbows}[1]{{\langle#1\rangle}}
\newcommand{\ket}[1]{{|#1\rangle}}
\newcommand{\bra}[1]{{\langle#1|}}
\begin{document}

\markboth{D. D. Georgiev and S. P. Gudder}{Sensitivity of entanglement measures in bipartite pure quantum states}

%%%%%%%%%%%%%%%%%%%%% Publisher's Area please ignore %%%%%%%%%%%%%%%
%
\catchline{}{}{}{}{}
%
%%%%%%%%%%%%%%%%%%%%%%%%%%%%%%%%%%%%%%%%%%%%%%%%%%%%%%%%%%%%%%%%%%%%

\title{Sensitivity of entanglement measures in bipartite pure quantum states}

\author{Danko D. Georgiev}

% https://orcid.org/0000-0001-6846-1194

\address{Institute for Advanced Study, 30 Vasilaki Papadopulu Str., Varna 9010, Bulgaria \\ danko.georgiev@mail.bg}

\author{Stanley P. Gudder}

% https://orcid.org/0000-0002-4681-0135

\address{Department of Mathematics, University of Denver, Denver, CO 80208, USA\\ sgudder@du.edu}

\maketitle

\begin{history}
\received{(Day Month Year)}
\revised{(Day Month Year)}
\end{history}

\begin{abstract}
Entanglement measures quantify the amount of quantum entanglement that is contained in quantum states. Typically, different entanglement measures do not have to be partially ordered. The presence of a definite partial order between two entanglement measures for all quantum states, however, allows for meaningful conceptualization of sensitivity to entanglement, which will be greater for the entanglement measure that produces the larger numerical values. Here, we have investigated the partial order between the normalized versions of four entanglement measures based on Schmidt decomposition of bipartite pure quantum states, namely, concurrence, tangle, entanglement robustness and Schmidt number. We have shown that among those four measures, the concurrence and the Schmidt number have the highest and the lowest sensitivity to quantum entanglement, respectively. Further, we have demonstrated how these measures could be used to track the dynamics of quantum entanglement in a simple quantum toy model composed of two qutrits. Lastly, we have employed state-dependent entanglement statistics to compute measurable correlations between the outcomes of quantum observables in agreement with the uncertainty principle. The presented results could be helpful in quantum applications that require monitoring of the available quantum resources for sharp identification of temporal points of maximal entanglement or system separability.
\end{abstract}

\keywords{entanglement measure; partial order; Schmidt decomposition.}

\section{Introduction}

Quantum entanglement is an important resource in quantum information
technologies \cite{Wootters1998a,Vedral2014,Yamamoto2016,Terhal2003}. Shared
quantum entanglement between two distantly located parties allows
for the execution of classically impossible tasks, such as quantum
teleportation \cite{Bennett1993,Pirandola2015}, superdense coding \cite{Bennett1992},
or quantum cryptography \cite{Ekert1991,Bennett1992b,Shenoy2017}. Being such a valuable commodity,
the amount of quantum entanglement possessed by composite quantum
systems has been subject to quantification with a variety of entanglement
measures \cite{Vedral1997,Vedral1998,Horodecki2001,Plenio2007}. Some of these measures were defined operationally \cite{Vedral1997,Eisert2000,Vidal2002b,Horodecki2009},
whereas others were defined with an explicit formula that is computed
from the complex quantum probability amplitudes characterizing the
state of the composite system \cite{Grobe1994,Vidal1999,Hill1997,Wootters1998b,Wootters2001,Rungta2001,Gudder2020a,Gudder2020b}.
The rapid burgeoning of quantum resource theory \cite{Chitambar2019}
has generated a zoo of entanglement measures, most of which appeared
under different names in the works of different authors. This impedes
accessibility of available mathematical results and complicates the
conduction of literature searches. Furthermore, the utility and performance
of different measures for tracking the entanglement dynamics in composite
quantum systems has been rarely compared. To remedy this situation,
in this work we analyze a number of entanglement measures based on
Schmidt decomposition and systematically explore their ability to
resolve maximal entanglement or complete disentanglement of a toy
model system consisting of two interacting qutrits.
Then we provide a comprehensive introduction to state-dependent entanglement statistics
and compute measurable correlations between the outcomes of quantum observables
in agreement with the uncertainty principle.

The organization of the presentation is as follows:
In~Section~\ref{sec:2}, we briefly summarize how every bipartite state vector can be expressed
in the Schmidt basis using singular value decomposition of the complex
coefficient matrix given in some explicit basis. Then, we introduce
four entanglement measures that can be computed directly from the
Schmidt coefficients. The most popular names for these four measures
are: concurrence, tangle, entanglement robustness and Schmidt number.
In~Section~\ref{sec:3}, we introduce the concept of relative sensitivity
to quantum entanglement and prove two main theorems, which establish
the existing partial order between different normalized versions of
the four entanglement measures.
In~Section~\ref{sec:4}, we present a quantum toy model of two interacting qutrits, which ensures
the minimal Hilbert space required to avoid reduction of entanglement robustness to concurrence.
In~Section~\ref{sec:5}, we report computational
results on the performance of each of the four entanglement measures
on resolving maximal entanglement or complete disentanglement of the
toy quantum system.
In~Section~\ref{sec:6}, we introduce the concept of state-dependent entanglement statistics and demonstrate how the Schmidt decomposition features prominently in the computation of measurable correlations between the outcomes of quantum observables.
Finally, we conclude with a brief discussion on the computational
complexity involved in the evaluation of the presented entanglement
measures and their overall utility for tracking the entanglement dynamics
in composite quantum systems.

\section{Entanglement measures}
\label{sec:2}

Quantum entanglement was originally conceptualized by Erwin Schr\"{o}dinger
in the form of probability relations between distant quantum systems
\cite{Schrodinger1935}. It needs to be emphasized, however, that
\emph{quantum probabilities} relate \emph{quantum observables}, which
describe potentialities of what could be measured, but not necessarily
of what is actually measured \cite{Georgiev2018}. This means that
given a quantum state vector $|\Psi\rangle$ of a composite quantum
system, one could always compute the \emph{expectation values} of
different quantum observables, including incompatible (non-commuting)
observables whose simultaneous measurement is physically impossible
\cite{Dirac1967,Susskind2014}. Furthermore, even for maximally entangled quantum
states there are quantum observables whose measurement outcomes are
maximally correlated and quantum observables whose measurement outcomes
are not correlated at all. For example, given the Bell state
$|\Phi^{+}\rangle=\frac{1}{\sqrt{2}} (|\uparrow_{z}\uparrow_{z}\rangle+|\downarrow_{z}\downarrow_{z}\rangle)$
one could either measure the observable $\hat{\sigma}_{z}\otimes\hat{\sigma}_{z}$
obtaining maximally correlated outcomes or measure the observable
$\hat{\sigma}_{z}\otimes\hat{\sigma}_{x}$ obtaining completely uncorrelated
outcomes \cite{Georgiev2021a}. This highlights the fact that quantum
entanglement is not a genuine property of quantum observables. Instead,
the quantum entanglement is a genuine property of the \emph{quantum
state vector} $|\Psi\rangle$, which is comprised of \emph{quantum
probability amplitudes} rather than \emph{quantum probabilities},
and motivates the following definition valid for non-relativistic
quantum mechanics of distinguishable particles.

\begin{definition}
(Entangled state) A bipartite quantum state vector $|\Psi\rangle\in\mathcal{H}_{A}\otimes\mathcal{H}_{B}$
is quantum entangled if and only if it cannot be written as a tensor
product \cite{Gudder2020b} 
\begin{equation}
|\Psi\rangle\neq|\psi\rangle_{A}\otimes|\psi\rangle_{B} .
\end{equation}
Otherwise, the quantum state vector is separable (factorizable).
\end{definition}

The Schmidt decomposition provides a straightforward criterion for
determining whether a bipartite quantum state vector is entangled or not.

\begin{theorem}
(Schmidt decomposition) Consider a composite bipartite quantum state
vector $|\Psi\rangle\in\mathcal{H}_{A}\otimes\mathcal{H}_{B}$. Given
any two complete orthonormal bases for the individual Hilbert spaces,
respectively \textup{$\left\{ |i\rangle_{A}\right\} $} for $\mathcal{H}_{A}$
and \textup{$\left\{ |j\rangle_{B}\right\} $} for $\mathcal{H}_{B}$,
one can always construct a complete orthonormal tensor product basis
$\left\{ |i\rangle_{A}\otimes|j\rangle_{B}\right\} $ for the composite
Hilbert space $\mathcal{H}=\mathcal{H}_{A}\otimes\mathcal{H}_{B}$
in which the bipartite quantum state vector is expressed as \cite{Georgiev2021a}
\begin{equation}
|\Psi\rangle=\sum_{i}\sum_{j}c_{ij}|i\rangle_{A}\otimes|j\rangle_{B} .
\end{equation}
Then, singular value decomposition of the complex coefficient matrix
$\hat{C}=\left(c_{ij}\right)$ renders it in the form 
\begin{equation}
\hat{C}=\hat{U}\hat{\Lambda}\hat{V}^{\dagger} ,
\end{equation}
where $\hat{U}$ and $\hat{V}^{\dagger}$ are unitary matrices, and
$\hat{\Lambda}$ is a diagonal matrix with non-negative singular values
(Schmidt coefficients) sorted in descending order $\lambda_{1}\geq\lambda_{2}\geq\ldots\geq\lambda_{s}\geq0$.
Finally, using the operations of matrix reshaping and reshuffling
in the context of Jamio{\l}kowski isomorphism \cite{Miszczak2011}, the
bipartite quantum state vector can always be expressed in the Schmidt
basis as 
\begin{equation}
|\Psi\rangle=\sum_{s}\lambda_{s}\left(\hat{U}|i_{s}\rangle_{A}\right)\otimes\left(\hat{V}^{\dagger}|j_{s}\rangle_{B}\right) ,
\end{equation}
where the index $s$ runs from 1 to $\min\left[\dim\left(\mathcal{H}_{A}\right),\dim\left(\mathcal{H}_{B}\right)\right]$.
\end{theorem}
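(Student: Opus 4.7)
The plan is to reduce the Schmidt decomposition to a direct application of the singular value decomposition of the coefficient matrix $\hat{C}$, which is the standard linear-algebraic fact that any complex rectangular matrix factors as $\hat{U}\hat{\Lambda}\hat{V}^{\dagger}$ with unitary $\hat{U}$, $\hat{V}$ and non-negative diagonal $\hat{\Lambda}$. I would take the SVD as a known input; the work is in threading its entries back through the tensor-product expansion so that the result appears in the claimed basis form.

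First I would fix orthonormal bases $\{|i\rangle_{A}\}$ and $\{|j\rangle_{B}\}$, so that $|\Psi\rangle=\sum_{i,j}c_{ij}|i\rangle_{A}\otimes|j\rangle_{B}$ with coefficient matrix $\hat{C}=(c_{ij})$ of size $\dim(\mathcal{H}_{A})\times\dim(\mathcal{H}_{B})$. Applying SVD, I would write $c_{ij}=\sum_{s}U_{is}\lambda_{s}V^{*}_{js}$, substitute into the expansion, and interchange the sums to obtain
\begin{equation}
|\Psi\rangle=\sum_{s}\lambda_{s}\left(\sum_{i}U_{is}|i\rangle_{A}\right)\otimes\left(\sum_{j}V^{*}_{js}|j\rangle_{B}\right) .
\end{equation}
Identifying the inner sums with $\hat{U}|i_{s}\rangle_{A}$ and $\hat{V}^{\dagger}|j_{s}\rangle_{B}$ yields exactly the asserted form, with the index $s$ running up to the number of singular values, which is $\min[\dim(\mathcal{H}_{A}),\dim(\mathcal{H}_{B})]$ because $\hat{\Lambda}$ has that many diagonal entries.

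Next I would verify that $\{\hat{U}|i_{s}\rangle_{A}\}$ and $\{\hat{V}^{\dagger}|j_{s}\rangle_{B}\}$ are orthonormal: since $\hat{U}$ and $\hat{V}^{\dagger}$ are unitary, they map orthonormal bases to orthonormal bases, so the Schmidt vectors inherit orthonormality from $\{|i_{s}\rangle_{A}\}$ and $\{|j_{s}\rangle_{B}\}$. The descending ordering $\lambda_{1}\geq\lambda_{2}\geq\cdots\geq 0$ and non-negativity are built into the SVD convention, and normalization of $|\Psi\rangle$ translates into $\sum_{s}\lambda_{s}^{2}=1$ via $\operatorname{tr}(\hat{C}^{\dagger}\hat{C})=\operatorname{tr}(\hat{\Lambda}^{2})$, though this last point is only needed if one wants unit-norm states.

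The main obstacle, more presentational than mathematical, is lining up the matrix-reshaping conventions invoked through the Jamio\l kowski isomorphism with the index bookkeeping in the tensor sum; in particular, being consistent about whether $\hat{V}^{\dagger}$ acts on kets via its matrix entries or their conjugates, and whether the singular values are associated with the left or right Schmidt vectors. Once a single convention is fixed at the outset, the rearrangement above is essentially mechanical, and no step beyond the existence of SVD and the unitary preservation of orthonormality is required.
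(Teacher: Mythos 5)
Your proof is correct and follows exactly the route the paper itself sketches (and outsources to its citations): expand $|\Psi\rangle$ in the product basis, apply the SVD $c_{ij}=\sum_{s}U_{is}\lambda_{s}V^{*}_{js}$, regroup the sums, and let unitarity of $\hat{U}$ and $\hat{V}$ deliver orthonormality of the Schmidt vectors and the bound $s\le\min\left[\dim(\mathcal{H}_{A}),\dim(\mathcal{H}_{B})\right]$. The only caveat is the one you already flag yourself: the right-hand Schmidt vector $\sum_{j}V^{*}_{js}|j\rangle_{B}$ is the $s$-th column of $\hat{V}^{*}$ rather than literally $\hat{V}^{\dagger}|j_{s}\rangle_{B}$, so the theorem's notation must be read with that convention fixed, which does not affect the validity of the argument.
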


\begin{definition}
(Schmidt rank) The number of non-zero Schmidt coefficients is referred
to as the Schmidt rank of a given Schmidt decomposition.
\end{definition}

The Schmidt rank provides a binary, Yes/No, classification of quantum
states. The quantum state is entangled if and only if its Schmidt
rank is greater than~1. For separable states, the Schmidt rank is
exactly~1. Unfortunately, the binary classification of quantum states
does not suffice for quantitative evaluation and management of quantum
entanglement as a resource. Thus, given an entangled state that has
at least two non-zero Schmidt coefficients, it would be useful to
have quantitative measures that determine how valuable the state is.
Next, we present four such entanglement measures whose numerical values
can be computed from explicit formulas involving the Schmidt coefficients, namely, \emph{concurrence},
\emph{tangle}, \emph{entanglement robustness} and \emph{Schmidt number}.

\subsection{Concurrence}

The \emph{concurrence} was first introduced by Hill and Wootters for
pure two qubit states using a modified Bell basis $\left\{ |\Phi^{+}\rangle,\imath|\Phi^{-}\rangle,\imath|\Psi^{+}\rangle,|\Psi^{-}\rangle\right\} $
\cite{Hill1997,Wootters1998b}, but was then generalized as \emph{I-concurrence}
to include multi-level bipartite quantum systems \cite{Rungta2001}
using the sum of the fourth powers of the Schmidt coefficients
\begin{equation}
\mathcal{C}\left(\Psi\right)=\sqrt{2\left[1-\sum_{i=1}^n\lambda_{i}^{4}\right]} ,
\end{equation}
where $n=\min\left[\dim\left(\mathcal{H}_{A}\right),\dim\left(\mathcal{H}_{B}\right)\right]$.

In the context of quantum interferometry with entangled particles, the concurrence is manifested as \emph{two-particle visibility} \cite{Jakob2010,Georgiev2021b,Roy2021}.
Recently, within the context of a general theory of entanglement,
Gudder proposed the \emph{entanglement number}, which is essentially
I-concurrence without the scale factor \cite{Gudder2020a,Gudder2020b}
\begin{equation}
e\left(\Psi\right)=\sqrt{1-\sum_{i}\lambda_{i}^{4}}=\sqrt{1-\textrm{Tr}\left(\hat{\rho}_{A}^{2}\right)}=\sqrt{1-\textrm{Tr}\left(\hat{\rho}_{B}^{2}\right)} .
\label{eq:Gudder-1}
\end{equation}
We can use the fact that the Schmidt decomposition gives a normalized
vector 
\begin{equation}
\sum_{i}\lambda_{i}^{2}=1\label{eq:Schmidt-1}
\end{equation}
in order to substitute in \eqref{eq:Gudder-1} and obtain
\begin{equation}
e\left(\Psi\right)=\sqrt{\sum_{i}\lambda_{i}^{2}-\sum_{j}\lambda_{j}^{4}}=\sqrt{\sum_{i}\left(\lambda_{i}^{2}-\lambda_{i}^{4}\right)}=\sqrt{\sum_{i}\lambda_{i}^{2}\left(1-\lambda_{i}^{2}\right)} .
\end{equation}
Since \eqref{eq:Schmidt-1} implies that 
\begin{equation}
1-\lambda_{i}^{2}=\sum_{j\neq i}\lambda_{j}^{2} ,
\end{equation}
one arrives at
\begin{equation}
e\left(\Psi\right)=\sqrt{\sum_{i\neq j}\lambda_{i}^{2}\lambda_{j}^{2}}=\sqrt{2\sum_{i<j}\lambda_{i}^{2}\lambda_{j}^{2}} .
\end{equation}
Alternatively, it is possible to use the identity $1^{2}=1$ to directly
obtain
\begin{equation}
e\left(\Psi\right)=\sqrt{1^{2}-\sum_{j}\lambda_{j}^{4}}=\sqrt{\left(\sum_{i}\lambda_{i}^{2}\right)^{2}-\sum_{j}\lambda_{j}^{4}}=\sqrt{\sum_{i\neq j}\lambda_{i}^{2}\lambda_{j}^{2}} .
\end{equation}
The range of the entanglement number is within the interval $0\leq e(\Psi)\leq\sqrt{\frac{n-1}{n}}$.

The normalized concurrence is the same as the normalized entanglement number \cite{Georgiev2021a}
\begin{equation}
\tilde{\mathcal{C}}\left(\Psi\right)=\tilde{e}\left(\Psi\right)=\sqrt{\frac{n}{n-1}\left(1-\sum_{i}\lambda_{i}^{4}\right)} .
\label{eq:norm-concurrence}
\end{equation}

Computationally useful is the fact that the entanglement number
and concurrence could be evaluated from squaring the Hermitian matrix $\hat{C}\hat{C}^{\dagger}$ obtained from the complex coefficient
matrix $\hat{C}=\left(c_{ij}\right)$ without the need of singular
value decomposition \cite{Gudder2020a,Gudder2020b}
\begin{equation}
e\left(\Psi\right)=\sqrt{\left[\textrm{Tr}\left(\hat{C}\hat{C}^{\dagger}\right)\right]^{2}-\textrm{Tr}\left[\left(\hat{C}\hat{C}^{\dagger}\right)^{2}\right]}=\sqrt{1-\textrm{Tr}\left[\left(\hat{C}\hat{C}^{\dagger}\right)^{2}\right]} .
\end{equation}

\subsection{Tangle}

The \emph{squared concurrence} $\mathcal{C}^{2}\left(\Psi\right)$
is a distinct entanglement measure referred to as \emph{tangle} \cite{Wootters1998a,Rungta2003}.
From \eqref{eq:norm-concurrence}, the normalized tangle is given
by
\begin{equation}
\tilde{\mathcal{T}}\left(\Psi\right)=\tilde{\mathcal{C}}^{2}\left(\Psi\right)=\frac{n}{n-1}\left(1-\sum_{i}\lambda_{i}^{4}\right)=\frac{n}{n-1}\sum_{i\neq j}\lambda_{i}^{2}\lambda_{j}^{2} .
\label{eq:tangle}
\end{equation}

\subsection{Robustness of entanglement}

For pure bipartite states, the \emph{robustness of entanglement} \cite{Vidal1999,Krynytskyi2021}
can be computed from the squared sum of the Schmidt coefficients as
follows
\begin{equation}
\mathcal{R}\left(\Psi\right)=\left(\sum_{i=1}^{n}\lambda_{i}\right)^{2}-1 ,
\end{equation}
where $n=\min\left[\dim\left(\mathcal{H}_{A}\right),\dim\left(\mathcal{H}_{B}\right)\right]$.

The range of the robustness of entanglement is 
\begin{equation}
0\leq\mathcal{R}(\Psi)\leq n-1 .
\end{equation}
Therefore, the normalized robustness $0\leq\tilde{\mathcal{R}}(\Psi)\leq1$
is given by
\begin{equation}
\tilde{\mathcal{R}}(\Psi)=\frac{1}{n-1}\left[\left(\sum_{i}\lambda_{i}\right)^{2}-1\right] .
\label{eq:R}
\end{equation}

Straightforward algebraic calculation shows that the robustness of
entanglement is the same as the non-normalized coherence \cite{Bera2015}
of the density matrix in the Schmidt basis 
\begin{align}
\mathcal{R}\left(\Psi\right) & =\left(\sum_{i}\lambda_{i}\right)^{2}-1=\left(\sum_{i}\lambda_{i}\right)^{2}-\sum_{i}\lambda_{i}^{2}\nonumber \\
 & =\left(\sum_{i}\lambda_{i}^{2}+2\sum_{i<j}\lambda_{i}\lambda_{j}\right)-\sum_{i}\lambda_{i}^{2}\nonumber \\
 & =2\sum_{i<j}\lambda_{i}\lambda_{j}=\sum_{i\neq j}\lambda_{i}\lambda_{j} .
\end{align}

The normalized robustness of entanglement is then the same as the
\emph{entanglement coherence} $C_{E}$ defined in the Schmidt basis
\cite{Pathania2020} 
\begin{equation}
\tilde{\mathcal{R}}\left(\Psi\right)=C_{E}=\frac{1}{n-1}\sum_{i\neq j}\lambda_{i}\lambda_{j} .
\label{eq:robustness}
\end{equation}

The entanglement coherence could be also expressed in terms of the
reduced density matrices $\hat{\rho}_{A}=\textrm{Tr}_{B}\left(\hat{\rho}_{AB}\right)$
and $\hat{\rho}_{B}=\textrm{Tr}_{A}\left(\hat{\rho}_{AB}\right)$
as follows 
\begin{equation}
C_{E}=\frac{1}{n-1}\left[\left(\textrm{Tr}\sqrt{\hat{\rho}_{A}}\right)^{2}-1\right]=\frac{1}{n-1}\left[\left(\textrm{Tr}\sqrt{\hat{\rho}_{B}}\right)^{2}-1\right] .
\end{equation}

In the special case of two qubits, the robustness of entanglement reduces to concurrence
\begin{equation}
\tilde{\mathcal{C}}\left(\Psi\right)=2\lambda_{1}\lambda_{2}=\tilde{\mathcal{R}}\left(\Psi\right) ,
\end{equation}
but for higher dimensional systems that admit more than two non-zero
Schmidt coefficients, those two entanglement measures are different.

\subsection{Schmidt number}

The \emph{Schmidt number} \cite{Law2005,Bogdanov2007,Kanada2015,Fastovets2021},
also referred to as \emph{degree of correlation} \cite{Grobe1994},
is another entanglement measure that uses the sum of the fourth powers
of the Schmidt coefficients 
\begin{equation}
\mathcal{K}\left(\Psi\right)=\frac{1}{{\displaystyle \sum_i}\lambda_{i}^{4}} .
\label{eq:K-non}
\end{equation}
The Schmidt number could be interpreted as counting the average number
of Schmidt modes actively involved in entanglement \cite{Law2005}.
The range of the Schmidt number is within the interval $1\leq\mathcal{K}\left(\Psi\right)\leq n$.

Therefore, the normalized Schmidt number $0\leq\tilde{\mathcal{K}}\left(\Psi\right)\leq1$
is given by 
\begin{equation}
\tilde{\mathcal{K}}\left(\Psi\right)=\frac{1}{n-1}\left(\frac{1}{{\displaystyle \sum_{i}}\lambda_{i}^{4}}-1\right) .
\label{eq:K}
\end{equation}

\section{Relative sensitivity to quantum entanglement}
\label{sec:3}

\begin{definition}
(Relative sensitivity to quantum entanglement) Given two normalized
entanglement measures $0\leq\tilde{\mathcal{A}}(\Psi)\leq1$ and $0\leq\tilde{\mathcal{B}}(\Psi)\leq1$,
we say that $\tilde{\mathcal{B}}(\Psi)$ is more sensitive to quantum
entanglement compared with $\tilde{\mathcal{A}}(\Psi)$ if and only
if the ordering $\tilde{\mathcal{A}}(\Psi)\leq\tilde{\mathcal{B}}(\Psi)$
holds for any state $\ket{\Psi}$. Otherwise, we say that the two measures
are unordered and their relative sensitivity is undefined.
\end{definition}

\begin{theorem}
The normalized versions of the Schmidt number $\tilde{\mathcal{K}}\left(\Psi\right)$,
tangle $\mathcal{\tilde{C}}^{2}\left(\Psi\right)$ and concurrence
$\mathcal{\tilde{C}}\left(\Psi\right)$ are ordered in an increasing
order of sensitivity to quantum entanglement, namely, for any state
$\ket{\Psi}$ whose singular value decomposition is described by a set of
Schmidt coefficients $\left\{ \lambda_{i}\right\} _{i=1}^{n}$, we
have
\begin{equation}
\tilde{\mathcal{K}}\left(\Psi\right)\leq\mathcal{\tilde{C}}^{2}\left(\Psi\right)\leq\mathcal{\tilde{C}}\left(\Psi\right) .
\end{equation}
\end{theorem}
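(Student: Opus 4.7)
The plan is to prove the two inequalities separately, treating the right one as essentially free and reducing the left one to a single application of the power-mean (Cauchy--Schwarz) inequality.

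For the upper inequality $\tilde{\mathcal{C}}^2(\Psi) \leq \tilde{\mathcal{C}}(\Psi)$, I would simply observe that by construction $0 \leq \tilde{\mathcal{C}}(\Psi) \leq 1$, and for any number $x \in [0,1]$ one has $x^2 \leq x$. Since $\tilde{\mathcal{C}}^2$ is literally defined as the square of $\tilde{\mathcal{C}}$, this step is immediate and requires no further work beyond noting the range of $\tilde{\mathcal{C}}$ established in Section~\ref{sec:2}.

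For the lower inequality $\tilde{\mathcal{K}}(\Psi) \leq \tilde{\mathcal{C}}^2(\Psi)$, I would introduce the abbreviation $S := \sum_{i=1}^n \lambda_i^4$ and rewrite the claim, using \eqref{eq:tangle} and \eqref{eq:K}, as
\begin{equation}
\frac{1}{n-1}\paren{\frac{1}{S} - 1} \;\leq\; \frac{n}{n-1}\paren{1 - S} .
\end{equation}
Multiplying by the common positive factor $(n-1)$ and rearranging, this becomes $\frac{1-S}{S} \leq n(1-S)$. Since the normalization $\sum_i \lambda_i^2 = 1$ together with $\lambda_i^2 \leq 1$ forces $S \leq 1$, the factor $1-S$ is non-negative. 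If $S = 1$ (the separable case with a single non-zero Schmidt coefficient), both sides vanish and equality holds. Otherwise $1-S > 0$ and the inequality collapses to $S \geq \tfrac{1}{n}$.

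The remaining step, which is the only non-trivial one, is to establish $\sum_i \lambda_i^4 \geq \tfrac{1}{n}$. I would obtain this from Cauchy--Schwarz (equivalently the power-mean inequality) applied to the $n$ numbers $\lambda_1^2, \ldots, \lambda_n^2$:
\begin{equation}
1 = \paren{\sum_{i=1}^n \lambda_i^2}^2 \;=\; \paren{\sum_{i=1}^n 1 \cdot \lambda_i^2}^2 \;\leq\; n \sum_{i=1}^n \lambda_i^4 .
\end{equation}
Dividing by $n$ yields the bound, closing the argument. The only mild subtlety to flag is the degenerate case $S = 1$, handled above; everywhere else the chain of equivalences goes through unambiguously.
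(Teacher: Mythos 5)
Your proposal is correct and follows essentially the same route as the paper: the first inequality from $0\leq\tilde{\mathcal{C}}(\Psi)\leq1$, and the second by reducing it to $\sum_i\lambda_i^4\geq\tfrac{1}{n}$, which the paper also obtains via Cauchy--Schwarz (in the form of Sedrakyan's inequality). The only cosmetic difference is that you divide by $1-S$ and split off the case $S=1$, whereas the paper multiplies the bound $\tfrac{1}{S}\leq n$ by the non-negative factor $1-S$ and thereby avoids the case distinction.
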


\begin{proof}
The relationship $\mathcal{\tilde{C}}\left(\Psi\right)\geq\mathcal{\tilde{C}}^{2}\left(\Psi\right)$
between concurrence and tangle is straightforward and follows from
the fact that the concurrence is bounded within $[0,1]$, namely,
$\mathcal{\tilde{C}}\left(\Psi\right)\leq1$. Therefore, $\mathcal{\tilde{C}}\left(\Psi\right)\times\mathcal{\tilde{C}}\left(\Psi\right)\leq\mathcal{\tilde{C}}\left(\Psi\right)\times1$.

To show that $\mathcal{\tilde{C}}^{2}\left(\Psi\right)\geq\tilde{\mathcal{K}}\left(\Psi\right)$,
we factor the normalized Schmidt number \eqref{eq:K} as follows
\begin{equation}
\tilde{\mathcal{K}}\left(\Psi\right)
%=\frac{1}{n-1}\left(\frac{1}{\sum_{i}\lambda_{i}^{4}}-1\right)
=\frac{1}{n-1}\left(\frac{1-\sum_{i}\lambda_{i}^{4}}{\sum_{i}\lambda_{i}^{4}}\right)=\frac{1}{n-1}\left(1-\sum_{i}\lambda_{i}^{4}\right)\left(\frac{1}{\sum_{i}\lambda_{i}^{4}}\right) .
\label{eq:K-factor}
\end{equation}
Because the quantity $\sum_{i}\lambda_{i}^{4}$ has a minimum only
when there are $n$ equal Schmidt coefficients, each with value of
$\frac{1}{\sqrt{n}}$, we obtain 
\begin{equation}
\sum_{i}\lambda_{i}^{4}\geq n\left(\frac{1}{\sqrt{n}}\right)^{4}=\frac{1}{n} .
\label{eq:sum-4}
\end{equation}
Taking the reciprocal values gives
\begin{equation}
\frac{1}{\sum_{i}\lambda_{i}^{4}}\leq n .
\label{eq:inv-sum-4}
\end{equation}
This inequality can also be directly proved using Sedrakyan's inequality
\cite{Sedrakyan2018} as follows
\begin{equation*}
\frac{1^{2}}{\sum_{i}\lambda_{i}^{4}}=\frac{\left(\sum_{i}\lambda_{i}^{2}\right)^{2}}{\sum_{i}\lambda_{i}^{4}}=\frac{\left(\lambda_{1}^{2}+\lambda_{2}^{2}+\ldots+\lambda_{n}^{2}\right)^{2}}{\lambda_{1}^{4}+\lambda_{2}^{4}+\ldots+\lambda_{n}^{4}}\leq\frac{\left(\lambda_{1}^{2}\right)^{2}}{\lambda_{1}^{4}}+\frac{\left(\lambda_{2}^{2}\right)^{2}}{\lambda_{2}^{4}}+\ldots+\frac{\left(\lambda_{n}^{2}\right)^{2}}{\lambda_{n}^{4}}=1\times n .
\end{equation*}
After substitution of \eqref{eq:inv-sum-4} in the Schmidt number
\eqref{eq:K-factor}, we conclude
\begin{equation}
\tilde{\mathcal{K}}\left(\Psi\right)=\frac{1}{n-1}\left(1-\sum_{i}\lambda_{i}^{4}\right)\left(\frac{1}{\sum_{i}\lambda_{i}^{4}}\right)\leq\frac{1}{n-1}\left(1-\sum_{i}\lambda_{i}^{4}\right)\times n=\mathcal{\tilde{C}}^{2}\left(\Psi\right) .
\end{equation}
This establishes the chain of inequalities as stated in the theorem.
\end{proof}

\begin{theorem}
The normalized versions of the Schmidt number $\tilde{\mathcal{K}}\left(\Psi\right)$,
robustness $\mathcal{\tilde{R}}\left(\Psi\right)$ and concurrence
$\mathcal{\tilde{C}}\left(\Psi\right)$ are ordered in an increasing
order of sensitivity to entanglement, namely, for the same set of
Schmidt coefficients we have
\begin{equation*}
\tilde{\mathcal{K}}\left(\Psi\right)\leq\mathcal{\tilde{R}}\left(\Psi\right)\leq\mathcal{\tilde{C}}\left(\Psi\right) .
\end{equation*}
\end{theorem}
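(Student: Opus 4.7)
My plan is to handle the two inequalities separately, since they require rather different tools.

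For the upper bound $\tilde{\mathcal{R}}(\Psi) \leq \tilde{\mathcal{C}}(\Psi)$, I would reduce to the squared inequality (both measures are non-negative) and then use the formulas
\begin{equation*}
\tilde{\mathcal{R}}(\Psi) = \frac{1}{n-1}\sum_{i\neq j}\lambda_i\lambda_j, \qquad \tilde{\mathcal{C}}^2(\Psi) = \frac{n}{n-1}\sum_{i\neq j}\lambda_i^2\lambda_j^2 ,
\end{equation*}
derived earlier in Section~\ref{sec:2}. The target inequality $\tilde{\mathcal{R}}^2 \leq \tilde{\mathcal{C}}^2$ is then equivalent to
\begin{equation*}
\left(\sum_{i\neq j}\lambda_i\lambda_j\right)^2 \leq n(n-1)\sum_{i\neq j}\lambda_i^2\lambda_j^2 .
\end{equation*}
This follows immediately from the Cauchy--Schwarz inequality applied to the constant vector of ones and the vector $(\lambda_i\lambda_j)_{i\neq j}$, which has exactly $n(n-1)$ entries. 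Equality holds precisely when all nonzero $\lambda_i$ are equal.

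For the lower bound $\tilde{\mathcal{K}}(\Psi) \leq \tilde{\mathcal{R}}(\Psi)$, I would first simplify algebraically. Using $\sum_i \lambda_i^2 = 1$ and writing $S_p = \sum_i \lambda_i^p$, formulas \eqref{eq:K} and \eqref{eq:R} give
\begin{equation*}
\tilde{\mathcal{K}}(\Psi) = \frac{1}{n-1}\left(\frac{1}{S_4}-1\right), \qquad \tilde{\mathcal{R}}(\Psi) = \frac{1}{n-1}\left(S_1^2 - 1\right) .
\end{equation*}
Hence the inequality reduces to $\frac{1}{S_4} \leq S_1^2$, i.e.\ to showing
\begin{equation*}
S_1^2\, S_4 \geq 1 .
\end{equation*}

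This last step is the main obstacle, since the standard power-mean bounds $S_1 \leq \sqrt{n}$ and $S_4 \geq 1/n$ point in opposite directions and cannot be combined directly. The plan is to route through $S_3$ using two applications of Cauchy--Schwarz. First, writing $\lambda_i^2 = \lambda_i^{1/2}\cdot\lambda_i^{3/2}$ yields
\begin{equation*}
1 = \left(\sum_i \lambda_i^2\right)^2 = \left(\sum_i \lambda_i^{1/2}\lambda_i^{3/2}\right)^2 \leq S_1\, S_3 ,
\end{equation*}
so $S_3 \geq 1/S_1$. Second, writing $\lambda_i^3 = \lambda_i\cdot\lambda_i^2$ gives $S_3^2 \leq S_2 S_4 = S_4$, hence $S_3 \leq \sqrt{S_4}$. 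Chaining these two bounds produces $1/S_1 \leq \sqrt{S_4}$, which squares to $S_1^2 S_4 \geq 1$. (Equivalently, one can obtain the same estimate in one shot from Hölder's inequality with exponents $(2,4,4)$ on the factorization $\lambda_i^2 = \lambda_i^{1/2}\cdot\lambda_i^{1/2}\cdot\lambda_i$.) Substituting back completes the chain $\tilde{\mathcal{K}}(\Psi) \leq \tilde{\mathcal{R}}(\Psi) \leq \tilde{\mathcal{C}}(\Psi)$, with equality throughout when either all Schmidt coefficients coincide or only one is nonzero.
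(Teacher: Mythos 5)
Your proposal is correct, and its second half takes a genuinely different route from the paper. For $\tilde{\mathcal{R}}(\Psi)\leq\tilde{\mathcal{C}}(\Psi)$ you and the paper do essentially the same thing: the paper reindexes the $\binom{n}{2}$ products $\lambda_i\lambda_j$ as $x_k$ and exhibits $\tilde{\mathcal{C}}^2-\tilde{\mathcal{R}}^2$ as a multiple of $\sum_{k<l}(x_k-x_l)^2$, which is precisely the Lagrange-identity form of your Cauchy--Schwarz step with the all-ones vector against $(\lambda_i\lambda_j)_{i\neq j}$. The real divergence is in $\tilde{\mathcal{K}}(\Psi)\leq\tilde{\mathcal{R}}(\Psi)$: both arguments reduce to $\left(\sum_i\lambda_i\right)^2\left(\sum_i\lambda_i^4\right)\geq\left(\sum_i\lambda_i^2\right)^3$, but the paper proves this by fully expanding both sides, grouping terms, and invoking the factorization $\lambda_i^4+\lambda_j^4-\lambda_i\lambda_j(\lambda_i^2+\lambda_j^2)=(\lambda_i-\lambda_j)^2(\lambda_i^2+\lambda_i\lambda_j+\lambda_j^2)$ together with the three-variable AM--GM $\lambda_i^3+\lambda_j^3+\lambda_k^3\geq 3\lambda_i\lambda_j\lambda_k$, whereas you obtain $S_1^2S_4\geq S_2^3$ by interpolating through $S_3$ with two Cauchy--Schwarz applications (equivalently a single H\"older step with exponents $(2,4,4)$). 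Your route is shorter, avoids the combinatorial bookkeeping of the expansion, and makes transparent that the inequality is an instance of log-convexity of power sums, so it extends immediately to analogous moment inequalities; the paper's expansion, on the other hand, identifies the exact nonnegative deficit $A-B$ term by term. One small caveat: your parenthetical equality claim for the first inequality ("all nonzero $\lambda_i$ equal") is slightly off --- Cauchy--Schwarz with the all-ones vector forces all products $\lambda_i\lambda_j$ ($i\neq j$) to coincide, so equality holds only when all $n$ coefficients are equal or at most one is nonzero (e.g.\ $n=3$ with $\lambda_1=\lambda_2=1/\sqrt{2}$, $\lambda_3=0$ gives strict inequality); your concluding sentence states the correct condition, and in any case this does not affect the theorem, which asserts only the inequalities.
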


\begin{proof}
To show that $\mathcal{\tilde{C}}\left(\Psi\right)\geq\mathcal{\tilde{R}}\left(\Psi\right)$,
we use \eqref{eq:robustness} to compute the square of the robustness
\begin{equation*}
\tilde{\mathcal{R}}^{2}\left(\Psi\right)=\frac{1}{\left(n-1\right)^{2}}\left(2\sum_{i<j}\lambda_{i}\lambda_{j}\right)^{2}
\end{equation*}
and then compare it to the tangle
\begin{equation*}
\tilde{\mathcal{C}}^{2}\left(\Psi\right)=\frac{n}{n-1}\times2\sum_{i<j}\left(\lambda_{i}\lambda_{j}\right)^{2} .
\end{equation*}

For $n=2$, we have $\mathcal{\tilde{R}}\left(\Psi\right)=\mathcal{\tilde{C}}\left(\Psi\right)=2\lambda_{1}\lambda_{2}$.

For $n\geq3$, we will have $n$ Schmidt coefficients, $\lambda_{1},\lambda_{2},\ldots,\lambda_{n}\geq0$,
which will generate $N=\left(\begin{array}{c}
n\\
2
\end{array}\right)=\frac{n\left(n-1\right)}{2}$ pairs $\lambda_{i}\lambda_{j}$ for which $i<j$. After re-indexing
all $N$ pairs using the single index
\begin{equation}
k=\left(i-1\right)n-\left(\begin{array}{c}
i\\
2
\end{array}\right)+j-i=\frac{1}{2}\left(i-1\right)\left(2n-i\right)+j-i
\end{equation}
 and introducing the combined variable $x_{k}=\lambda_{i}\lambda_{j}\geq0$,
we have
\begin{equation*}
\frac{\left(n-1\right)^{2}}{4}\tilde{\mathcal{R}}^{2}\left(\Psi\right)=\left(\sum_{i<j}\lambda_{i}\lambda_{j}\right)^{2}=\left(\sum_{k=1}^{N}x_{k}\right)^{2}=\sum_{k}x_{k}^{2}+2\sum_{k<l}x_{k}x_{l} .
\end{equation*}
Similarly, we obtain
\begin{equation*}
\frac{\left(n-1\right)^{2}}{4}\tilde{\mathcal{C}}^{2}\left(\Psi\right)=\frac{n\left(n-1\right)}{2}\sum_{i<j}\left(\lambda_{i}\lambda_{j}\right)^{2}=N\times\sum_{k=1}^{N}x_{k}^{2}=\sum_{k}x_{k}^{2}+\sum_{k<l}\left(x_{k}^{2}+x_{l}^{2}\right) .
\end{equation*}
Now, we take the difference
\begin{align*}
\frac{\left(n-1\right)^{2}}{4}\left[\tilde{\mathcal{C}}^{2}\left(\Psi\right)-\tilde{\mathcal{R}}^{2}\left(\Psi\right)\right] & =\sum_{k}x_{k}^{2}+\sum_{k<l}\left(x_{k}^{2}+x_{l}^{2}\right)-\sum_{k}x_{k}^{2}-2\sum_{k<l}x_{k}x_{l}\\
 & =\sum_{k<l}\left(x_{k}-x_{l}\right)^{2}\geq0 .
\end{align*}
Since $\frac{\left(n-1\right)^{2}}{4}>0$, it follows that $\tilde{\mathcal{C}}\left(\Psi\right)\geq\tilde{\mathcal{R}}\left(\Psi\right)$.

To show that $\tilde{\mathcal{R}}\left(\Psi\right)\geq\tilde{\mathcal{K}}\left(\Psi\right)$,
we subtract \eqref{eq:K} from \eqref{eq:R} and observe that the
relation $\left[\tilde{\mathcal{R}}\left(\Psi\right)-\tilde{\mathcal{K}}\left(\Psi\right)\right]\geq0$
is equivalent to 
\begin{equation}
\left(\sum_{i}\lambda_{i}\right)^{2}\geq\frac{1}{\sum_{i}\lambda_{i}^{4}} .
\label{eq:triple-0}
\end{equation}
The general proof for $n\geq2$ of the later relation is quite intricate
due to the presence of the sum of fourth powers in the denominator
of the fraction. Combining the normalization of the Schmidt coefficients
$\lambda_{1}^{2}+\lambda_{2}^{2}+\ldots+\lambda_{n}^{2}=1$ with $1^{3}=1$
transforms \eqref{eq:triple-0} into
\begin{equation}
A\equiv\left(\sum_{i}\lambda_{i}\right)^{2}\left(\sum_{i}\lambda_{i}^{4}\right)\geq\left(\sum_{i}\lambda_{i}^{2}\right)^{3}\equiv B .
\label{eq:triple-1}
\end{equation}
To proceed, we will need the following factorization
\begin{equation}
\lambda_{i}^{4}+\lambda_{j}^{4}-\lambda_{i}\lambda_{j}\left(\lambda_{i}^{2}+\lambda_{j}^{2}\right)=\left(\lambda_{i}-\lambda_{j}\right)^{2}\left(\lambda_{i}^{2}+\lambda_{i}\lambda_{j}+\lambda_{j}^{2}\right)\geq0
\end{equation}
and the inequality between arithmetic mean and geometric mean for
a triple of non-negative numbers rearranged in the form
\begin{equation}
\lambda_{i}^{3}+\lambda_{j}^{3}+\lambda_{k}^{3}-3\lambda_{i}\lambda_{j}\lambda_{k}\geq0 .
\end{equation}
Next, we expand the two sides of \eqref{eq:triple-1} into groups
of similar terms
\begin{align}
A &=  \sum_{i}\lambda_{i}^{6}
+2\sum_{i<j}\lambda_{i}\lambda_{j}\left(\lambda_{i}^{4}+\lambda_{j}^{4}\right)
+\sum_{i<j}\lambda_{i}^{2}\lambda_{j}^{2}\left(\lambda_{i}^{2}+\lambda_{j}^{2}\right)\nonumber\\
& \quad +2\sum_{i<j<k}\lambda_{i}\lambda_{j}\lambda_{k}\left(\lambda_{i}^{3}+\lambda_{j}^{3}+\lambda_{k}^{3}\right) ,
\label{eq:triple-2} \\
B &=\sum_{i}\lambda_{i}^{6}+3\sum_{i<j}\lambda_{i}^{2}\lambda_{j}^{2}\left(\lambda_{i}^{2}+\lambda_{j}^{2}\right)+6\sum_{i<j<k}\lambda_{i}^{2}\lambda_{j}^{2}\lambda_{k}^{2} .
\label{eq:triple-3}
\end{align}
Taking the difference between \eqref{eq:triple-2} and \eqref{eq:triple-3} gives
\begin{align}
\frac{A-B}{2} & =\sum_{i<j}\lambda_{i}\lambda_{j}\left(\lambda_{i}^{4}+\lambda_{j}^{4}\right)
-\sum_{i<j}\lambda_{i}^{2}\lambda_{j}^{2}\left(\lambda_{i}^{2}+\lambda_{j}^{2}\right) \nonumber\\
& \quad +\sum_{i<j<k}\lambda_{i}\lambda_{j}\lambda_{k}\left(\lambda_{i}^{3}+\lambda_{j}^{3}+\lambda_{k}^{3}\right)
-3\sum_{i<j<k}\lambda_{i}^{2}\lambda_{j}^{2}\lambda_{k}^{2}\nonumber \\
 & =\sum_{i<j}\lambda_{i}\lambda_{j}\left(\lambda_{i}-\lambda_{j}\right)^{2}\left(\lambda_{i}^{2}+\lambda_{i}\lambda_{j}+\lambda_{j}^{2}\right) \nonumber\\
& \quad +\sum_{i<j<k}\lambda_{i}\lambda_{j}\lambda_{k}\left(\lambda_{i}^{3}+\lambda_{j}^{3}+\lambda_{k}^{3}-3\lambda_{i}\lambda_{j}\lambda_{k}\right)\nonumber\\
& \geq0 .
\label{eq:triple-4}
\end{align}
In the last step, we have used the fact that the Schmidt coefficients
are non-negative, $\lambda_{i}\geq0$. The minimal dimensional case,
$n=2$, is obtained trivially from \eqref{eq:triple-4} by plugging
in $\lambda_{3}=0$. Thus, the Schmidt number is a lower bound on
robustness, namely, $\left[\tilde{\mathcal{R}}\left(\Psi\right)-\tilde{\mathcal{K}}\left(\Psi\right)\right]\geq0$.
\end{proof}

\section{Quantum toy model}
\label{sec:4}

The relative sensitivity to quantum entanglement could be exploited
in quantum applications, which require sharp resolution of either
maximal entanglement or complete separability of bipartite quantum
systems. In such cases, the choice of suitable entanglement measure
could depend on the actual practical task. For example, to achieve
the sharpest possible resolution of maximally entangled states, one
could use the Schmidt number $\tilde{\mathcal{K}}\left(\Psi\right)$,
which has lowest sensitivity to quantum entanglement. Conversely,
to achieve the sharpest possible resolution of completely separable
states, one could use the concurrence $\tilde{\mathcal{C}}\left(\Psi\right)$,
which has highest sensitivity to quantum entanglement. To illustrate
the content of the latter two statements, next we construct and employ
a minimal quantum toy model.

A major simplification of the toy model can be accomplished by noting
that quantum dynamics due to internal Hamiltonians does not have an
impact on the entanglement contained in a composite system \cite{Georgiev2021a}.
Indeed, suppose that the initial state of a bipartite quantum system
is expressed in the Schmidt basis as
\begin{equation}
|\Psi(0)\rangle=\sum_{s}\lambda_{s}|i\rangle_{A}\otimes|j\rangle_{B} .
\end{equation}
If $\hat{H}_{A}$ is the internal Hamiltonian of subsystem $A$ and
$\hat{H}_{B}$ is the internal Hamiltonian of subsystem $B$, the
overall unitary action with the Hamiltonian $\hat{H}_{A}\otimes\hat{I}_{B}+\hat{I}_{A}\otimes\hat{H}_{B}$
results in
\begin{align}
|\Psi(t)\rangle & =e^{-\frac{\imath}{\hbar}\left(\hat{H}_{A}\otimes\hat{I}_{B}+\hat{I}_{A}\otimes\hat{H}_{B}\right)t}|\Psi(0)\rangle\nonumber \\
 & =\sum_{s}\lambda_{s}\left(e^{-\frac{\imath}{\hbar}\hat{H}_{A}t}|i\rangle_{A}\right)\otimes\left(e^{-\frac{\imath}{\hbar}\hat{H}_{B}t}|j\rangle_{B}\right)\nonumber \\
 & =\sum_{s}\lambda_{s}|i(t)\rangle_{A}\otimes|j(t)\rangle_{B} .
\end{align}
Because the individual operators $e^{-\frac{\imath}{\hbar}\hat{H}_{A}t}$
and $e^{-\frac{\imath}{\hbar}\hat{H}_{B}t}$ are also unitary, they
preserve the orthonormality of the basis sets $\left\{ |i(t)\rangle_{A}\right\} $
and $\left\{ |j(t)\rangle_{B}\right\} $ at all time $t$. This means
that the Schmidt coefficients $\left\{ \lambda_{s}\right\} $ remain
constant (do not evolve in time) and any entanglement measure dependent
on the Schmidt coefficients remains constant too \cite{Georgiev2021a}.
The latter result allows us to set without loss of generality, $\hat{H}_{A}=0$
and $\hat{H}_{B}=0$, investigating the quantum dynamics resulting
solely due to a non-zero interaction Hamiltonian $\hat{H}_{\textrm{int}}\neq0$.
In fact, it is exactly the dynamics due to the non-zero interaction
Hamiltonian that has the capacity to entangle or disentangle the composite
system \cite{Georgiev2021a}.

The minimal bipartite quantum model that prevents reduction of concurrence
to the robustness of entanglement requires the composition of at least
three-level subsystems (qutrits). Therefore, for the construction
of our quantum toy model we choose two qutrits governed by the spin-1
Heisenberg interaction Hamiltonian \cite{Piddock2021}
\begin{equation}
\hat{H}_{\textrm{int}}=\hbar\omega\left(\hat{\sigma}_{x}\otimes\hat{\sigma}_{x}+\hat{\sigma}_{y}\otimes\hat{\sigma}_{y}+\hat{\sigma}_{z}\otimes\hat{\sigma}_{z}\right) ,
\end{equation}
where the spin-1 matrices are given by 
\begin{align}
\hat{\sigma}_{x} & =\frac{1}{\sqrt{2}}\left(\begin{array}{ccc}
0 & 1 & 0\\
1 & 0 & 1\\
0 & 1 & 0
\end{array}\right) ,\\
\hat{\sigma}_{y} & =\frac{1}{\sqrt{2}}\imath\left(\begin{array}{ccc}
0 & -1 & 0\\
1 & 0 & -1\\
0 & 1 & 0
\end{array}\right) ,\\
\hat{\sigma}_{z} & =\left(\begin{array}{ccc}
1 & 0 & 0\\
0 & 0 & 0\\
0 & 0 & -1
\end{array}\right) .
\end{align}
The eigenvectors of the observable $\hat{\sigma}_{z}$ form the basis
set $\left\{ |\uparrow\rangle,|\ocircle\rangle,|\downarrow\rangle\right\} $,
respectively with eigenvalues $\left\{ 1,0,-1\right\} $.

To solve the Schr\"{o}dinger equation for any initial state $|\Psi(0)\rangle$,
it would be convenient to follow the standard procedure utilizing
the energy eigenbasis. The interaction Hamiltonian
$\hat{H}_{\textrm{int}}$ has five eigenstates with eigenvalue $\hbar\omega$:
\begin{align}
|E_{1}\rangle & =|\uparrow\uparrow\rangle ,\label{eq:E1}\\
|E_{2}\rangle & =\frac{1}{\sqrt{2}}\left(|\uparrow\ocircle\rangle+|\ocircle\uparrow\rangle\right) ,\label{eq:E2}\\
|E_{3}\rangle & =\frac{1}{\sqrt{6}}\left(|\uparrow\downarrow\rangle+2|\ocircle\ocircle\rangle+|\downarrow\uparrow\rangle\right) ,\label{eq:E3}\\
|E_{4}\rangle & =\frac{1}{\sqrt{2}}\left(|\ocircle\downarrow\rangle+|\downarrow\ocircle\rangle\right) ,\label{eq:E4}\\
|E_{5}\rangle & =|\downarrow\downarrow\rangle ,\label{eq:E5}
\end{align}
three eigenstates with eigenvalue $-\hbar\omega$:
\begin{align}
|E_{6}\rangle & =-\frac{1}{\sqrt{2}}\left(|\uparrow\ocircle\rangle-|\ocircle\uparrow\rangle\right) ,\label{eq:E6}\\
|E_{7}\rangle & =-\frac{1}{\sqrt{2}}\left(|\uparrow\downarrow\rangle-|\downarrow\uparrow\rangle\right) ,\label{eq:E7}\\
|E_{8}\rangle & =-\frac{1}{\sqrt{2}}\left(|\ocircle\downarrow\rangle-|\downarrow\ocircle\rangle\right) ,\label{eq:E8}
\end{align}
and a single eigenstate with eigenvalue $-2\hbar\omega$:
\begin{equation}
|E_{9}\rangle=\frac{1}{\sqrt{3}}\left(|\uparrow\downarrow\rangle-|\ocircle\ocircle\rangle+|\downarrow\uparrow\rangle\right) .\label{eq:E9}
\end{equation}
Throughout this work, we will set $\omega=1$ rad/ps in the interaction
Hamiltonian between the two qutrits.

The general solution of the Schr\"{o}dinger equation in the energy eigenbasis
is then 
\begin{equation}
\imath\hbar\frac{\partial}{\partial t}|\Psi\rangle=\hat{H}|\Psi\rangle=\hat{H}\sum_{n}\alpha_{n}|E_{n}\rangle=\sum_{n}E_{n}\alpha_{n}|E_{n}\rangle ,
\end{equation}
which can be explicitly written as
\begin{align}
|\Psi(t)\rangle & =\sum_{n}\alpha_{n}e^{-\frac{\imath}{\hbar}E_{n}t}|E_{n}\rangle\nonumber \\
 & =e^{-\imath\omega t}\left(\alpha_{1}|E_{1}\rangle+\alpha_{2}|E_{2}\rangle+\alpha_{3}|E_{3}\rangle+\alpha_{4}|E_{4}\rangle+\alpha_{5}|E_{5}\rangle\right)\nonumber \\
 & \quad+e^{\imath\omega t}\left(\alpha_{6}|E_{6}\rangle+\alpha_{7}|E_{7}\rangle+\alpha_{8}|E_{8}\rangle\right)+e^{2\imath\omega t}\alpha_{9}|E_{9}\rangle
, \label{eq:psi-t}
\end{align}
where $\alpha_{n}$ is the initial quantum probability amplitude of
the state $|E_{n}\rangle$ at $t=0$.

\section{Quantum dynamics of entanglement measures}
\label{sec:5}

Dynamics of the expectation value of any quantum observable $\hat{A}$
could be obtained from the solution \eqref{eq:psi-t} together with
the Born rule
\begin{equation}
\langle\hat{A}\rangle=\langle\Psi(t)|\hat{A}|\Psi(t)\rangle
\end{equation}
Here, we have chosen to track as quantum observables the individual
projectors onto the eigenstates of $\hat{\sigma}_{z}\otimes\hat{\sigma}_{z}$,
namely, $\hat{\mathcal{P}}\left(\uparrow\uparrow\right)=|\uparrow\uparrow\rangle\langle\uparrow\uparrow|$,
$\hat{\mathcal{P}}\left(\uparrow\ocircle\right)=|\uparrow\ocircle\rangle\langle\uparrow\ocircle|$,
$\hat{\mathcal{P}}\left(\uparrow\downarrow\right)=|\uparrow\downarrow\rangle\langle\uparrow\downarrow|$,
$\hat{\mathcal{P}}\left(\ocircle\uparrow\right)=|\ocircle\uparrow\rangle\langle\ocircle\uparrow|$,
$\hat{\mathcal{P}}\left(\ocircle\ocircle\right)=|\ocircle\ocircle\rangle\langle\ocircle\ocircle|$,
$\hat{\mathcal{P}}\left(\ocircle\downarrow\right)=|\ocircle\downarrow\rangle\langle\ocircle\downarrow|$,
$\hat{\mathcal{P}}\left(\downarrow\uparrow\right)=|\downarrow\uparrow\rangle\langle\downarrow\uparrow|$,
$\hat{\mathcal{P}}\left(\downarrow\ocircle\right)=|\downarrow\ocircle\rangle\langle\downarrow\ocircle|$
and $\hat{\mathcal{P}}\left(\downarrow\downarrow\right)=|\downarrow\downarrow\rangle\langle\downarrow\downarrow|$.

For the initial state $|\Psi(0)\rangle$ we considered each of the
nine eigenstates of $\hat{\sigma}_{z}\otimes\hat{\sigma}_{z}$. Projecting
each of these initial states onto the energy eigenbasis using \eqref{eq:E1}-\eqref{eq:E9}
gives the initial values for the energy quantum probability amplitudes
$\alpha_{n}$ for the corresponding simulations. Due to the existing
freedom of choice for the $\uparrow$ vs $\downarrow$ direction in
space, there are mirror symmetries in the obtained solutions, which
can be grouped into 4 cases.

\subsection{Case 0}

Trivial quantum dynamics, in which the initial state is also an energy
eigenstate, occurs when $|\Psi(0)\rangle=|\uparrow\uparrow\rangle=|E_{1}\rangle$
or $|\Psi(0)\rangle=|\downarrow\downarrow\rangle=|E_{5}\rangle$.
In those situations, the expectation values for the respective projectors
remain unitary at all times, namely $\langle\hat{\mathcal{P}}\left(\uparrow\uparrow\right)\rangle=1$
or $\langle\hat{\mathcal{P}}\left(\downarrow\downarrow\right)\rangle=1$.
The composite state vector $|\Psi(t)\rangle$ remains in a separable
state at all times with constant Schmidt coefficients
\begin{equation}
\begin{cases}
\lambda_{1} & =1\\
\lambda_{2} & =0\\
\lambda_{3} & =0 .
\end{cases} 
\end{equation}
All four entanglement measures remain zero at all times.

\subsection{Case 1}

\begin{figure}[t]
\begin{centering}
\includegraphics[width=140mm]{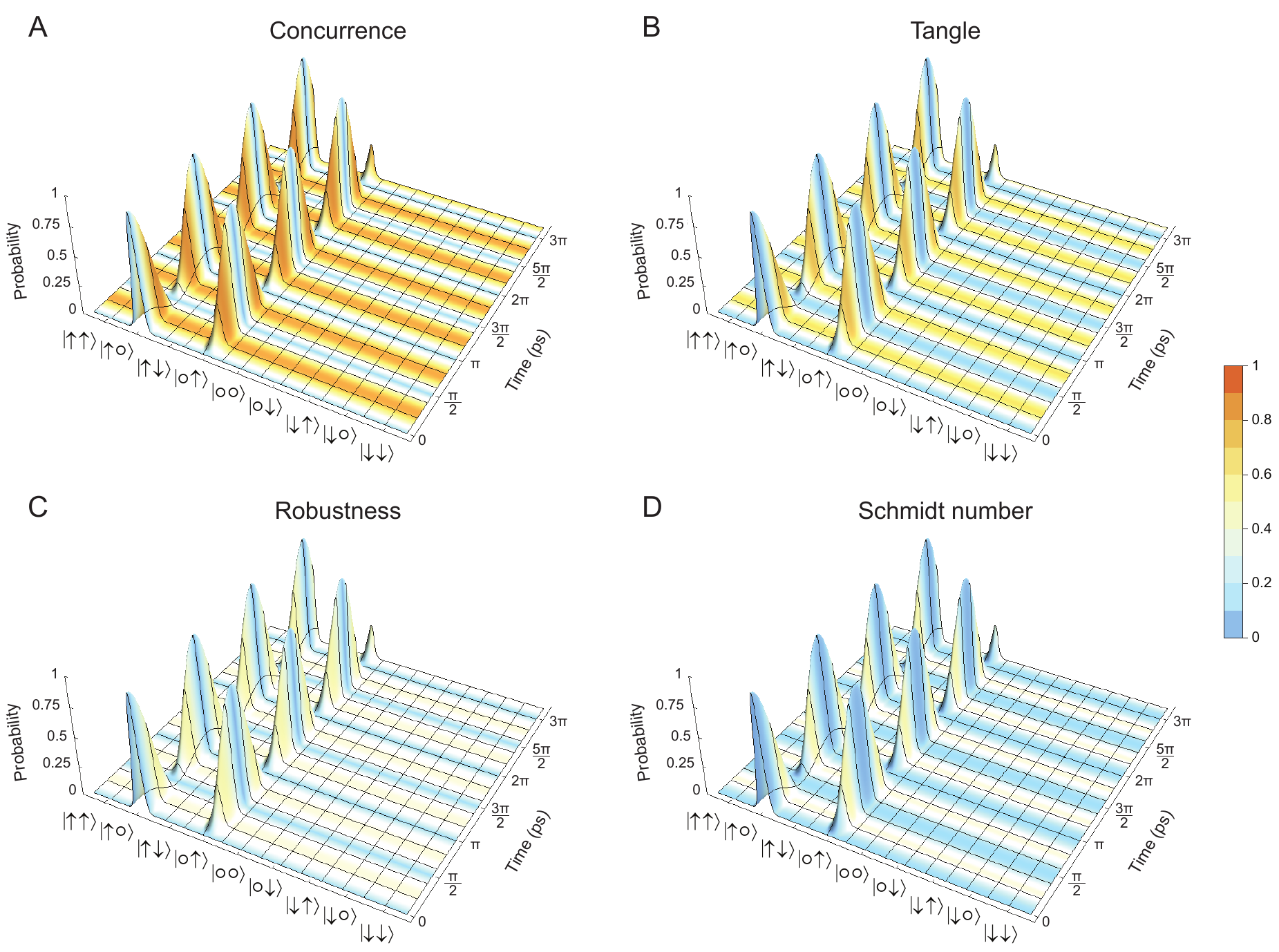}
\par\end{centering}

\caption{\label{fig:1}Dynamics of the expectation values for the respective
projectors onto the eigenstates of the quantum observable $\hat{\sigma}_{z}\otimes\hat{\sigma}_{z}$
simulated with the initial state $|\Psi(0)\rangle=|\uparrow\ocircle\rangle$.
The amount of quantum entanglement is measured with the use of concurrence
in panel (A), tangle in panel (B), robustness of entanglement in panel
(C) and the Schmidt number in panel (D). The coupling strength $\hbar\omega$
in the interaction Hamiltonian between the two qutrits was set by
$\omega=1$ rad/ps.}
\end{figure}

\begin{figure}[t]
\begin{centering}
\includegraphics[width=120mm]{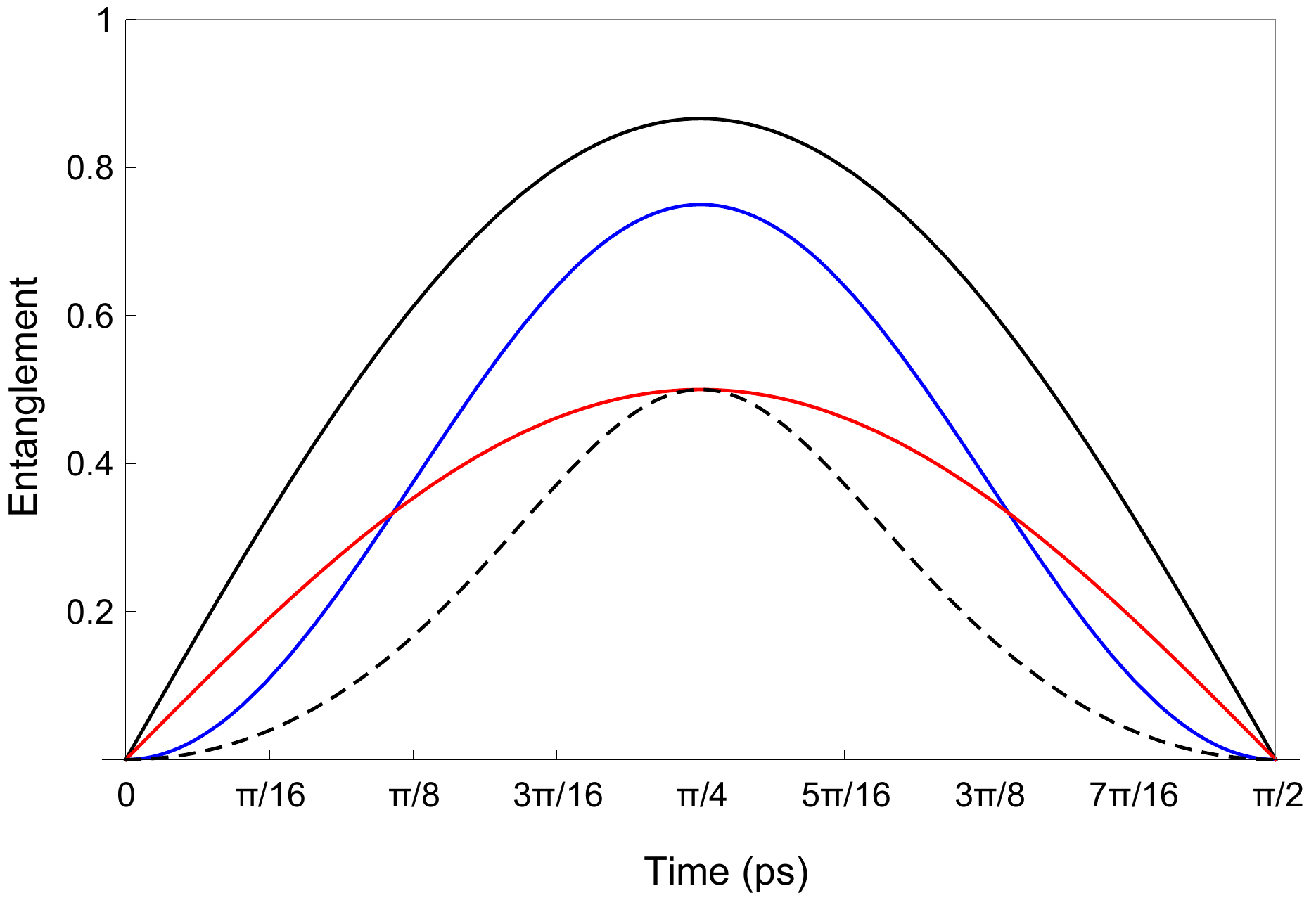}
\par\end{centering}

\caption{\label{fig:2}Dynamics of the entanglement measures for the simulation
with initial state $|\Psi(0)\rangle=|\uparrow\ocircle\rangle$. Concurrence
$\tilde{\mathcal{C}}\left(\Psi\right)$ is denoted with solid black
line, tangle $\tilde{\mathcal{T}}\left(\Psi\right)$ with solid blue
line, robustness of entanglement $\tilde{\mathcal{R}}\left(\Psi\right)$
with solid red line and the Schmidt number $\tilde{\mathcal{K}}\left(\Psi\right)$
with dashed black line. The coupling strength $\hbar\omega$ in the
interaction Hamiltonian between the two qutrits was set by $\omega=1$
rad/ps. Each cycle, between two consecutive separable states, lasts
$\frac{\pi}{2}$ ps.}
\end{figure}

Non-trivial quantum dynamics is obtained when the initial state is
not an energy eigenstate, but is a superposition of several energy
eigenstates with different eigenvalues. For $|\Psi(0)\rangle=|\uparrow\ocircle\rangle=\frac{1}{\sqrt{2}}\left(|E_{2}\rangle-|E_{6}\rangle\right)$
or $|\Psi(0)\rangle=|\ocircle\uparrow\rangle=\frac{1}{\sqrt{2}}\left(|E_{2}\rangle+|E_{6}\rangle\right)$,
the quantum state vector oscillates forth-and-back between the states
$|\uparrow\ocircle\rangle$ and $|\ocircle\uparrow\rangle$. Similarly,
for $|\Psi(0)\rangle=|\ocircle\downarrow\rangle=\frac{1}{\sqrt{2}}\left(|E_{4}\rangle-|E_{8}\rangle\right)$
or $|\Psi(0)\rangle=|\downarrow\ocircle\rangle=\frac{1}{\sqrt{2}}\left(|E_{4}\rangle+|E_{8}\rangle\right)$,
the quantum state vector oscillates forth-and-back between the states
$|\ocircle\downarrow\rangle$ and $|\downarrow\ocircle\rangle$. In
those four situations, the state vector $|\Psi(t)\rangle$ remains
confined within a two-dimensional subspace of the composite Hilbert
space, which implies that the entanglement measures cannot reach their
absolute maximum. The composite state vector has dynamic Schmidt coefficients
that depend on time
\begin{equation}
\begin{cases}
\lambda_{1} & =\left|\cos\left(\omega t\right)\right|\\
\lambda_{2} & =\left|\sin\left(\omega t\right)\right|\\
\lambda_{3} & =0 .
\end{cases}\label{eq:case-1}
\end{equation}
All four entanglement measures undergo cyclic dynamics (Fig. \ref{fig:1}).
Each cycle, between two consecutive separable states, lasts $\frac{\pi}{2}$
ps. Concurrence $\tilde{\mathcal{C}}\left(\Psi\right)$ is most sensitive
to the presence of quantum entanglement and forms an upper bound on
all four entanglement measures (Fig. \ref{fig:2}). Conversely, the
Schmidt number $\tilde{\mathcal{K}}\left(\Psi\right)$ is least sensitive
to the presence of quantum entanglement and forms a lower bound on
all four entanglement measures (Fig. \ref{fig:2}).

\begin{theorem}
The tangle $\tilde{\mathcal{T}}\left(\Psi\right)$ and robustness
of entanglement $\tilde{\mathcal{R}}\left(\Psi\right)$ are not partially
ordered.
\end{theorem}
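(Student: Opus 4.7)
The plan is to falsify both candidate orderings by exhibiting two bipartite pure states with opposite strict inequalities between $\tilde{\mathcal{T}}(\Psi)$ and $\tilde{\mathcal{R}}(\Psi)$; by the definition of relative sensitivity in Section~\ref{sec:3}, this is enough. The cleanest source of counterexamples is the Case~1 trajectory of the two-qutrit toy model itself, because a single one-parameter family supplies both needed states.

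Along Case~1 the Schmidt spectrum is $(\lambda_{1},\lambda_{2},\lambda_{3})=(|\cos\omega t|,|\sin\omega t|,0)$ inside a $3\otimes 3$ Hilbert space, so the normalizing dimension is fixed at $n=3$. Setting $c=|\cos\omega t|$, $s=|\sin\omega t|$, and $x=cs=\tfrac{1}{2}|\sin 2\omega t|\in[0,\tfrac{1}{2}]$, I would substitute the spectrum into (\ref{eq:tangle}) and (\ref{eq:R}) and use the identities $c^{2}+s^{2}=1$ and $c^{4}+s^{4}=1-2c^{2}s^{2}$ to collect terms, obtaining the closed forms
\begin{equation*}
\tilde{\mathcal{T}}(\Psi)=3x^{2}, \qquad \tilde{\mathcal{R}}(\Psi)=x .
\end{equation*}

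Since $\tilde{\mathcal{T}}-\tilde{\mathcal{R}}=x(3x-1)$ changes sign at $x=\tfrac{1}{3}$, both regimes are realised on the same trajectory. At $\omega t=\pi/4$ one has $x=\tfrac{1}{2}$, giving $\tilde{\mathcal{T}}=\tfrac{3}{4}>\tfrac{1}{2}=\tilde{\mathcal{R}}$, whereas at $\omega t=\pi/12$ one has $x=\tfrac{1}{4}$, giving $\tilde{\mathcal{T}}=\tfrac{3}{16}<\tfrac{1}{4}=\tilde{\mathcal{R}}$. Since both states live in the same $3\otimes 3$ system and therefore share the same normalizing prefactor $1/(n-1)$, the two opposing strict inequalities jointly rule out both $\tilde{\mathcal{T}}\leq\tilde{\mathcal{R}}$ and $\tilde{\mathcal{R}}\leq\tilde{\mathcal{T}}$ from holding universally, proving the theorem.

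There is no substantive obstacle once one commits to exhibiting counterexamples rather than chasing a universal inequality. The only subtlety is to keep $n=3$ fixed across the comparison: one cannot silently drop to $n=2$ just because $\lambda_{3}=0$, since the prefactor $1/(n-1)$ in (\ref{eq:tangle}) and (\ref{eq:R}) is attached to the ambient Hilbert space dimension and not to the Schmidt rank. This fact is also what rules out the $n=2$ case as a source of counterexamples, since there $\tilde{\mathcal{R}}=\tilde{\mathcal{C}}$ forces $\tilde{\mathcal{T}}=\tilde{\mathcal{C}}^{2}\leq\tilde{\mathcal{R}}$ automatically.
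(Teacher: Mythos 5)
Your proposal is correct and follows essentially the same route as the paper: both use the Case~1 Schmidt spectrum $(|\cos\omega t|,|\sin\omega t|,0)$ of the two-qutrit model with $n=3$ and exhibit two times on that trajectory with opposite strict inequalities (the paper picks $\omega t=\pi/16$ and $\pi/4$, you pick $\pi/12$ and $\pi/4$). Your explicit closed forms $\tilde{\mathcal{T}}=3x^{2}$, $\tilde{\mathcal{R}}=x$ with $x=\tfrac{1}{2}|\sin 2\omega t|$ merely make analytic what the paper verifies numerically, and your remark about keeping $n=3$ fixed is a sound and worthwhile clarification.
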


\begin{proof}
The lack of partial order is demonstrated numerically (Fig.~\ref{fig:2}).
For example, consider the Schmidt coefficients \eqref{eq:case-1}
with the following assignments: $\tilde{\mathcal{R}}\left[\Psi\left(\omega t=\frac{\pi}{16}\right)\right]>\tilde{\mathcal{T}}\left[\Psi\left(\omega t=\frac{\pi}{16}\right)\right]$
and $\tilde{\mathcal{R}}\left[\Psi\left(\omega t=\frac{\pi}{4}\right)\right]<\tilde{\mathcal{T}}\left[\Psi\left(\omega t=\frac{\pi}{4}\right)\right]$.
\end{proof}

\subsection{Case 2}

\begin{figure}[t]
\begin{centering}
\includegraphics[width=140mm]{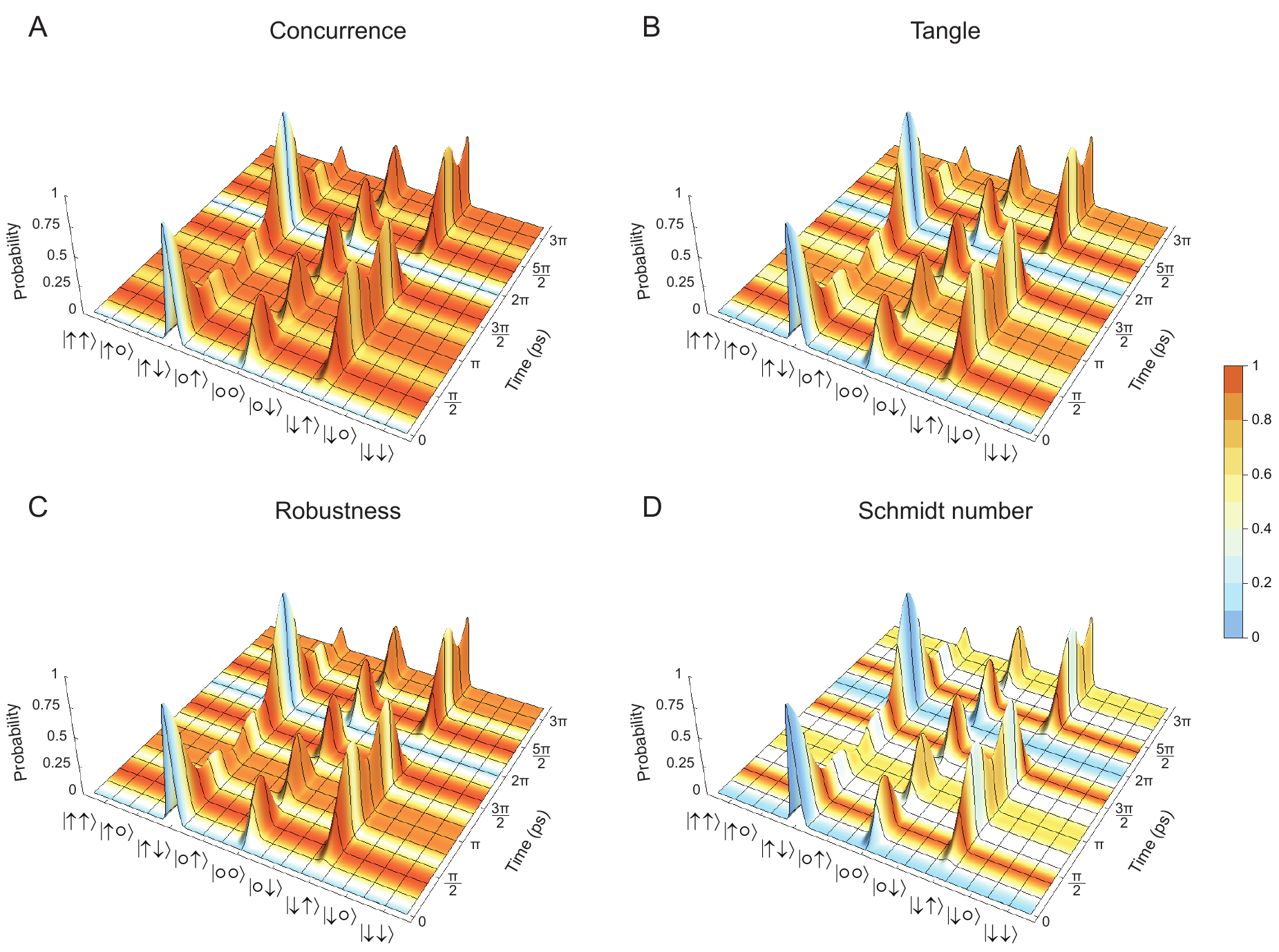}
\par\end{centering}

\caption{\label{fig:3}Dynamics of the expectation values for the respective
projectors onto the eigenstates of the quantum observable $\hat{\sigma}_{z}\otimes\hat{\sigma}_{z}$
simulated with the initial state $|\Psi(0)\rangle=|\uparrow\downarrow\rangle$.
The amount of quantum entanglement is measured with the use of concurrence
in panel (A), tangle in panel (B), robustness of entanglement in panel
(C) and the Schmidt number in panel (D). The coupling strength $\hbar\omega$
in the interaction Hamiltonian between the two qutrits was set by
$\omega=1$ rad/ps.}
\end{figure}

\begin{figure}[t]
\begin{centering}
\includegraphics[width=120mm]{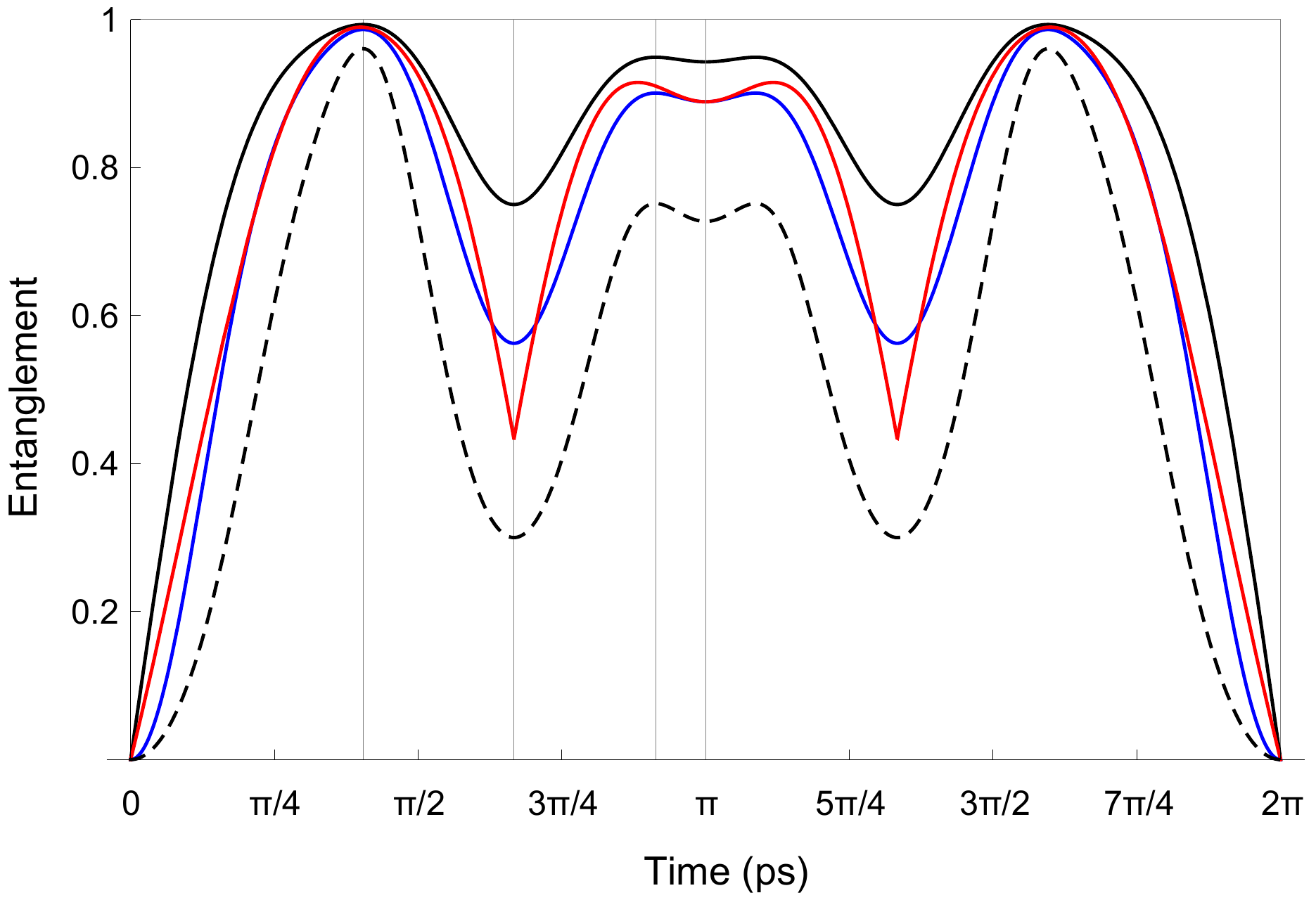}
\par\end{centering}

\caption{\label{fig:4}Dynamics of the entanglement measures for the simulation
with initial state $|\Psi(0)\rangle=|\uparrow\downarrow\rangle$.
Concurrence $\tilde{\mathcal{C}}\left(\Psi\right)$ is denoted with
solid black line, tangle $\tilde{\mathcal{T}}\left(\Psi\right)$ with
solid blue line, robustness of entanglement $\tilde{\mathcal{R}}\left(\Psi\right)$
with solid red line and the Schmidt number $\tilde{\mathcal{K}}\left(\Psi\right)$
with dashed black line. The coupling strength $\hbar\omega$ in the
interaction Hamiltonian between the two qutrits was set by $\omega=1$
rad/ps. Each cycle, between two consecutive separable states, lasts
$2\pi$ ps.}
\end{figure}

Complicated quantum dynamics, manifesting varying quantum interference
effects, occurs when $|\Psi(0)\rangle=|\uparrow\downarrow\rangle=\frac{1}{\sqrt{6}}|E_{3}\rangle-\frac{1}{\sqrt{2}}|E_{7}\rangle+\frac{1}{\sqrt{3}}|E_{9}\rangle$
or $|\Psi(0)\rangle=|\downarrow\uparrow\rangle=\frac{1}{\sqrt{6}}|E_{3}\rangle+\frac{1}{\sqrt{2}}|E_{7}\rangle+\frac{1}{\sqrt{3}}|E_{9}\rangle$.
In those situations, the quantum state vector $|\Psi(t)\rangle$ oscillates
with somewhat irregular pattern between the states $|\uparrow\downarrow\rangle$,
$|\ocircle\ocircle\rangle$ and $|\downarrow\uparrow\rangle$. The
composite state vector has dynamic Schmidt coefficients
\begin{equation}
\begin{cases}
\lambda_{1} & =\frac{1}{3}\sqrt{\frac{7}{2}+3\cos\left(\omega t\right)+\frac{3}{2}\cos\left(2\omega t\right)+\cos\left(3\omega t\right)}\\
\lambda_{2} & =\frac{2}{3}\sqrt{5+4\cos\left(\omega t\right)}\sin^{2}\left(\frac{1}{2}\omega t\right)\\
\lambda_{3} & =\frac{2}{3}\left|\sin\left(\frac{3}{2}\omega t\right)\right| .
\end{cases}
\end{equation}
All four entanglement measures undergo cyclic dynamics (Fig. \ref{fig:3}).
Each cycle, between two consecutive separable states, lasts $2\pi$
ps. Again, concurrence $\tilde{\mathcal{C}}\left(\Psi\right)$ and
the Schmidt number $\tilde{\mathcal{K}}\left(\Psi\right)$ form bounds
on the four entanglement measures from above and below, respectively
(Fig. \ref{fig:4}). Because, in this simulation it is possible for
all 3 Schmidt coefficients to be non-zero, the entanglement measures
explore almost the full range from separable to maximally entangled
state (Fig. \ref{fig:4}). Again, it is clearly seen in the vicinity
of local minima that the tangle $\tilde{\mathcal{T}}\left(\Psi\right)$
and robustness of entanglement $\tilde{\mathcal{R}}\left(\Psi\right)$
and are not partially ordered (Fig. \ref{fig:4}).

\subsection{Case 3}

\begin{figure}[t]
\begin{centering}
\includegraphics[width=140mm]{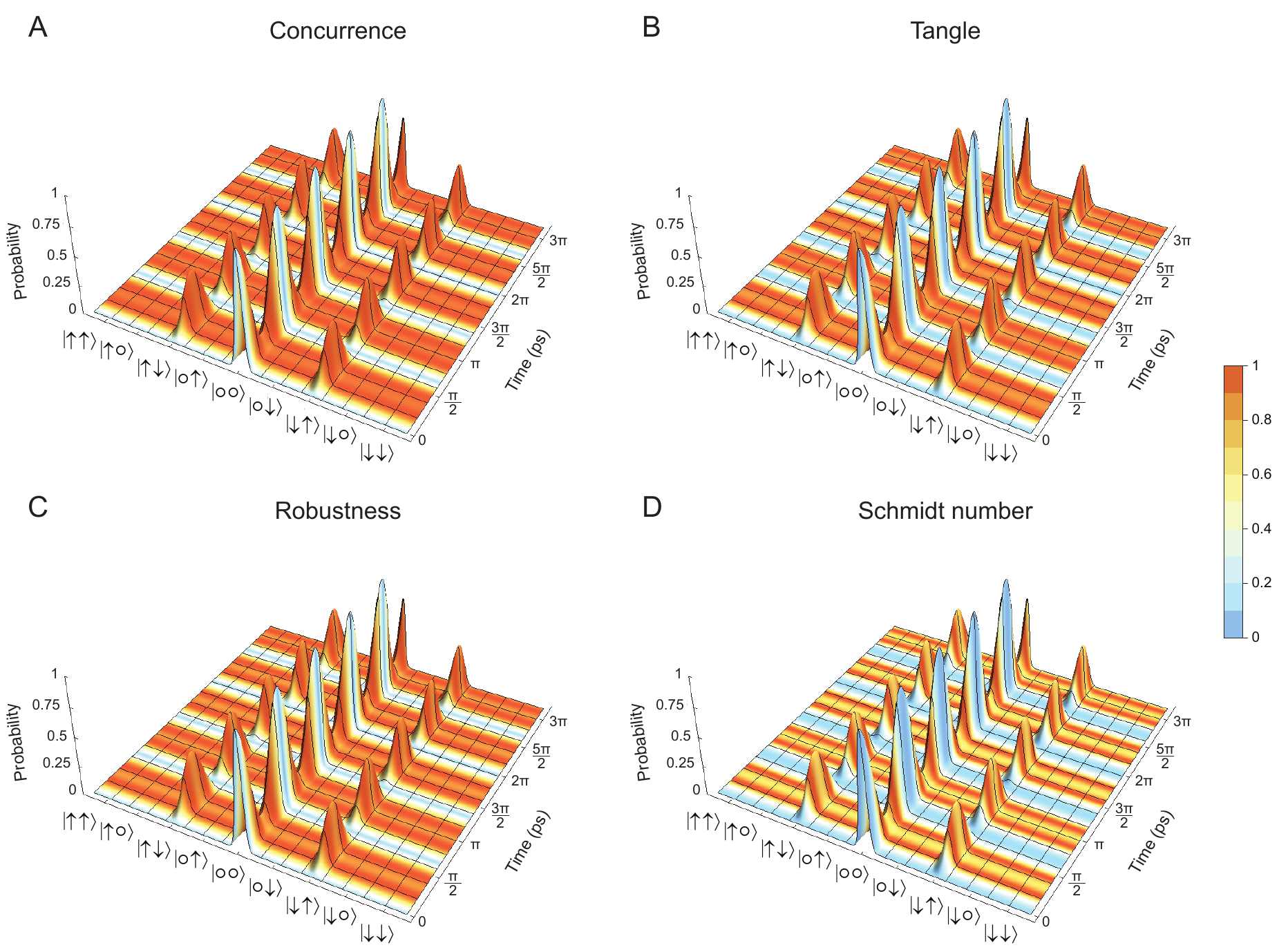}
\par\end{centering}

\caption{\label{fig:5}Dynamics of the expectation values for the respective
projectors onto the eigenstates of the quantum observable $\hat{\sigma}_{z}\otimes\hat{\sigma}_{z}$
simulated with the initial state $|\Psi(0)\rangle=|\ocircle\ocircle\rangle$.
The amount of quantum entanglement is measured with the use of concurrence
in panel (A), tangle in panel (B), robustness of entanglement in panel
(C) and the Schmidt number in panel (D). The coupling strength $\hbar\omega$
in the interaction Hamiltonian between the two qutrits was set by
$\omega=1$ rad/ps.}
\end{figure}

\begin{figure}[t]
\begin{centering}
\includegraphics[width=120mm]{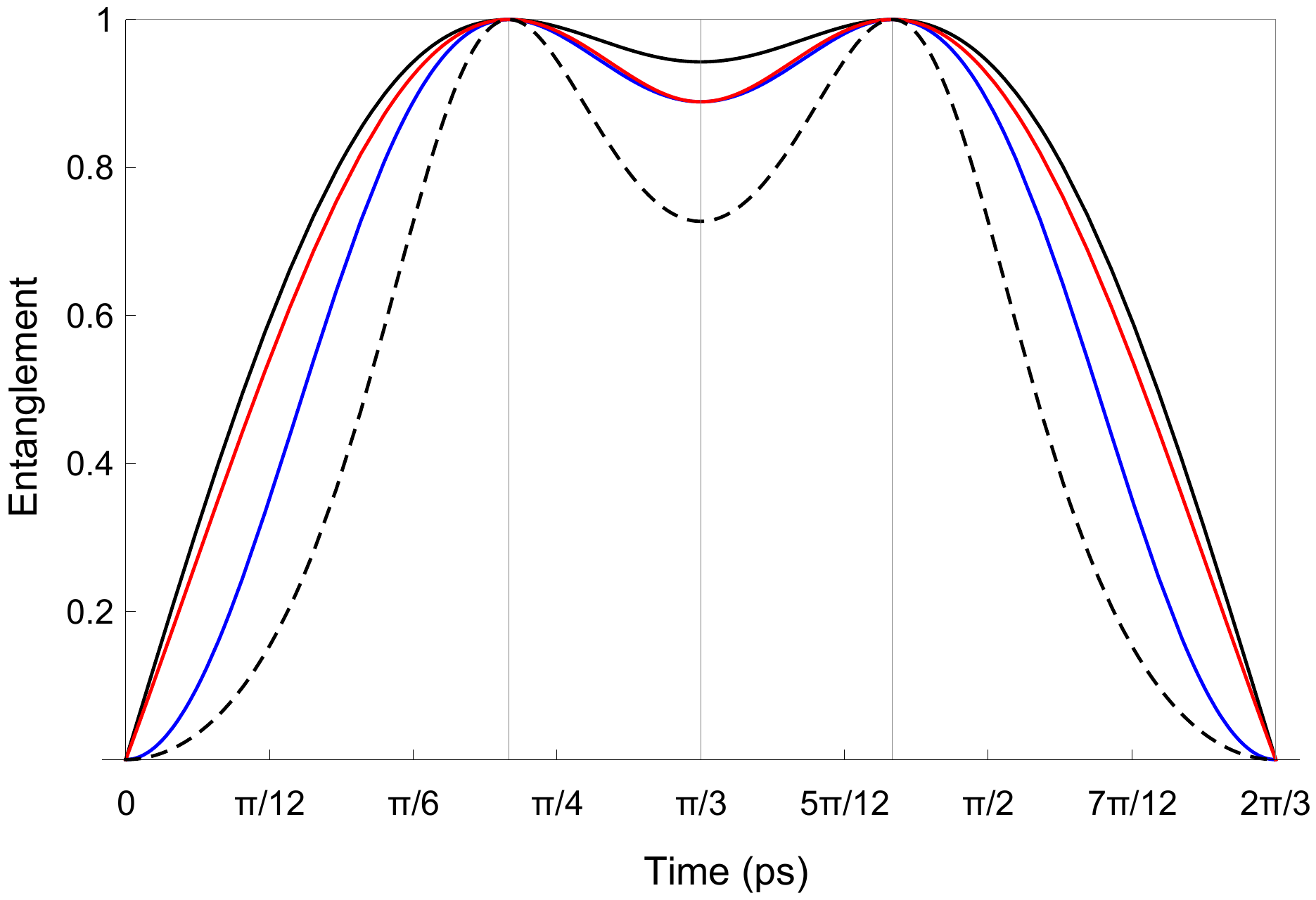}
\par\end{centering}

\caption{\label{fig:6}Dynamics of the entanglement measures for the simulation
with initial state $|\Psi(0)\rangle=|\ocircle\ocircle\rangle$. Concurrence
$\tilde{\mathcal{C}}\left(\Psi\right)$ is denoted with solid black
line, tangle $\tilde{\mathcal{T}}\left(\Psi\right)$ with solid blue
line, robustness of entanglement $\tilde{\mathcal{R}}\left(\Psi\right)$
with solid red line and the Schmidt number $\tilde{\mathcal{K}}\left(\Psi\right)$
with dashed black line. The coupling strength $\hbar\omega$ in the
interaction Hamiltonian between the two qutrits was set by $\omega=1$
rad/ps. Each cycle, between two consecutive separable states, lasts
$\frac{2\pi}{3}$ ps.}
\end{figure}

Quantum dynamics with regular quantum interference pattern occurs
when $|\Psi(0)\rangle=|\ocircle\ocircle\rangle=\sqrt{\frac{2}{3}}|E_{3}\rangle-\frac{1}{\sqrt{3}}|E_{9}\rangle$.
In this case, the state vector $|\Psi(t)\rangle$ explores the complete
spectrum from separable to maximally entangled states. The composite
state vector has dynamic Schmidt coefficients
\begin{equation}
\begin{cases}
\lambda_{1} & =\frac{1}{3}\sqrt{5+4\cos\left(3\omega t\right)}\\
\lambda_{2} & =\frac{2}{3}\left|\sin\left(\frac{3}{2}\omega t\right)\right|\\
\lambda_{3} & =\frac{2}{3}\left|\sin\left(\frac{3}{2}\omega t\right)\right| .
\end{cases}
\end{equation}
All four entanglement measures undergo cyclic dynamics (Fig. \ref{fig:5}).
Each cycle, between two consecutive separable states, lasts $\frac{2\pi}{3}$
ps. The maximal entanglement reaches 1 at $\omega t=\frac{2\pi}{9}$
and $\omega t=\frac{4\pi}{9}$ (Fig.~\ref{fig:6}). Due to its high
sensitivity to quantum entanglement, the concurrence $\tilde{\mathcal{C}}\left(\Psi\right)$
is able to resolve quite sharply the separable states (narrow blue
bands in Fig.~\ref{fig:5}A). Conversely, due to its low sensitivity
to quantum entanglement, the Schmidt number $\tilde{\mathcal{K}}\left(\Psi\right)$
is able to resolve quite sharply the maximally entangled states (narrow
red bands in Fig.~\ref{fig:5}D).

The sharpness of the bands produced in contour plots depends on the temporal rate of change of the entanglement measures.
By construction, all normalized entanglement measures coincide at the two extreme cases: at separable states, where they have zero value, and at maximally entangled states, where they have unit value \cite{Vedral1997,Vedral1998,Vidal2000,Donald2002}. The existence of partial order, $\mathcal{A}(t)\geq \mathcal{B}(t)$, then implies that in the neighborhood of a separable state, the quantum dynamics is described by U- or V-shaped cups such that the cup for $\mathcal{A}(t)$ is inside the cup for $\mathcal{B}(t)$ (Fig.~\ref{fig:6}). Consequently, the time derivative for $\mathcal{A}(t)$ decreases towards and raises from the separable state more steeply compared to the time derivative for $\mathcal{B}(t)$, namely, $\left|\frac{d\mathcal{A}(t)}{dt}\right|\geq\left|\frac{d\mathcal{B}(t)}{dt}\right|$. The~roles played by the two entanglement measures are reversed in the neighborhood of a maximally entangled state, where the quantum dynamics is described by U- or V-shaped caps such that the cap for $\mathcal{B}(t)$ is inside the cap for $\mathcal{A}(t)$ (Fig.~\ref{fig:6}). This means that the time derivative for $\mathcal{B}(t)$ raises towards and decreases from the maximally entangled state more steeply compared to the time derivative for $\mathcal{A}(t)$, namely, $\left|\frac{d\mathcal{B}(t)}{dt}\right|\geq\left|\frac{d\mathcal{A}(t)}{dt}\right|$.
Different time derivatives result in different durations of the time windows for which the entanglement measures stay inside some fixed small region $\varepsilon >0$ near 0 or 1. In general, steeper time derivatives imply shorter time windows and sharper resolution of quantum dynamics.

\section{Entanglement statistics}
\label{sec:6}

To further illustrate how the Schmidt decomposition features in the calculation of measurable correlations between the outcomes of quantum observables, next we introduce the concept of state-dependent \emph{entanglement statistics}. We also derive an alternative, but equivalent, characterization of separable/entangled states in terms of absence/presence of correlations between all/some local quantum observables.

We begin with the study of quantum operator statistics. Let $\lscript _S(H)$ be the set of self-adjoint (Hermitian) operators \cite{Sakurai2020,Walecka2021} on a finite-dimensional complex Hilbert space~$H$.
We call the elements of $\lscript _S(H)$ \textit{observable operators}. If $\ket{\phi}\in H$ is a vector state and $\hat{A}\in\lscript _S(H)$ is an operator, the \textit{expectation} (or \textit{average}) of $\hat{A}$ in the state $\ket{\phi}$ is
\begin{equation}
\elbows{\hat{A}}_\phi =\elbows{\phi | \hat{A} |\phi}. 
\end{equation}
The \textit{deviation} of $\hat{A}$ in the state $\ket{\phi}$ is
\begin{equation}
\hat{D}_\phi (\hat{A})=\hat{A}-\elbows{\hat{A}}_\phi \hat{I}
\end{equation}
where $\hat{I}\in\lscript _S(H)$ is the identity operator. An interaction between any two observable operators $\hat{A},\hat{B}\in\lscript _S(H)$ is described by the $\phi$-\textit{correlation} given by
\begin{equation}
\rmcor _\phi (\hat{A},\hat{B})=\elbows{\phi |\hat{D}_\phi (\hat{A})\hat{D}_\phi (\hat{B})|\phi} .
\end{equation}
In general, $\rmcor _\phi (\hat{A},\hat{B})$ is not a real number and we have
\begin{equation}
\overline{\rmcor _\phi (\hat{A},\hat{B})}=\rmcor _\phi (\hat{B},\hat{A}) .
\end{equation}
We define the $\phi$-\textit{covariance} of $\hat{A}$ and $\hat{B}$ as
\begin{equation}
\Delta _\phi (\hat{A},\hat{B}) =\rmre \left[\rmcor _\phi (\hat{A},\hat{B})\right]
\end{equation}
and the $\phi$-\textit{variance} of $\hat{A}$ as
\begin{equation}
\Delta _\phi (\hat{A}) =\Delta _\phi (\hat{A},\hat{A})=\rmcor _\phi (\hat{A},\hat{A}) .
\end{equation}
We have that 
\begin{align}
\rmcor _\phi (\hat{A},\hat{B})&=\elbows{\phi |(\hat{A}-\elbows{\hat{A}}_\phi \hat{I})(\hat{B}-\elbows{\hat{B}}_\phi \hat{I})|\phi}
   =\elbows{\hat{A}\hat{B}}_\phi-\elbows{\hat{A}}_\phi\elbows{\hat{B}}_\phi \label{eq:corr},\\
\Delta _\phi (\hat{A},\hat{B})&=\rmre \left[\elbows{\hat{A}\hat{B}}_\phi-\elbows{\hat{A}}_\phi\elbows{\hat{B}}_\phi \right] ,\\
\Delta _\phi (\hat{A})&=\elbows{\hat{A}^2}_\phi-\elbows{\hat{A}}_\phi ^2 .
\end{align}
If $\rmcor _\phi (\hat{A},\hat{B})=0$, we say that $\hat{A}$ and $\hat{B}$ are $\phi$-\textit{uncorrelated}. Of course, from \eqref{eq:corr} it follows that $\rmcor _\phi (\hat{A},\hat{B})=0$ if and only if
$\elbows{\phi |\hat{A}\hat{B}|\phi}=\elbows{\hat{A}}_\phi\elbows{\hat{B}}_\phi$.
Since quantum observables are represented by self-adjoint (Hermitian) operators \cite{Sakurai2020,Walecka2021}
\begin{equation}
\overline{\elbows{\phi |\hat{A}\hat{B}|\phi}}=\elbows{\phi|(\hat{A}\hat{B})^\dagger|\phi}=\elbows{\phi |\hat{B}^\dagger \hat{A}^\dagger |\phi}=\elbows{\phi |\hat{B} \hat{A} |\phi} ,
\end{equation}
it follows that if $\hat{A}$ and $\hat{B}$ are $\phi$-uncorrelated then $\hat{B}$ and $\hat{A}$ are $\phi$-uncorrelated. For $\hat{A},\hat{B}\in\lscript _S(H)$, their \textit{commutator} is given by
$\sqbrac{\hat{A},\hat{B}}=\hat{A}\hat{B}-\hat{B}\hat{A}$. An important connection between these concepts is the following \textit{uncertainty principle} \cite{Gudder2022}:
\begin{equation}                % equation (1)
\label{eq1}
\frac{1}{4}\ab{\elbows{\phi |\sqbrac{\hat{A},\hat{B}}|\phi}}^2+\sqbrac{\Delta _\phi (\hat{A},\hat{B})}^2=\ab{\rmcor _\phi (\hat{A},\hat{B})}^2\le\Delta _\phi (\hat{A})\Delta _\phi (\hat{B}) .
\end{equation}
As a special case, we have the Heisenberg--Robertson uncertainty principle \cite{Robertson1929,Baggott2020}
\begin{equation}
\frac{1}{4}\ab{\elbows{\phi |\sqbrac{\hat{A},\hat{B}}|\phi}}^2\le\Delta _\phi (\hat{A})\Delta _\phi (\hat{B}) .
\end{equation}
It follows from \eqref{eq1} that $\hat{A}$ and $\hat{B}$ are uncorrelated if and only if
\begin{equation}
\elbows{\phi |\sqbrac{\hat{A},\hat{B}} |\phi}=\Delta _\phi (\hat{A},\hat{B})=0 .
\end{equation}
In this case, \eqref{eq1} gives no information.

An operator $\hat{A}\in\lscript _S(H)$ satisfying $0\le \hat{A}\le \hat{I}$ is called an \textit{effect} \cite{Heinosaari2012}. Effects correspond to quantum events or yes-no measurements. A \textit{real-valued observable} is a set of effects $\ascript =\brac{\hat{A}_x\colon x\in\Omega _\ascript}$ where $\Omega _\ascript$ is a finite subset of $\real$ called the \textit{outcome set} of $\ascript$ and we have $\sum _{x\in\Omega _\ascript}\hat{A}_x=\hat{I}$. We consider $\hat{A}_x$ to be the event that occurs when a measurement of $\ascript$ results in the outcome $x$. The \textit{probability} that $\ascript$ has the outcome $x$ when the system is in the state $\ket{\phi}\in H$ is $\elbows{\hat{A}_x}_\phi =\elbows{\phi |\hat{A}_x|\phi}$. Notice that $x\mapsto\elbows{\phi |\hat{A}_x|\phi}$ is a probability measure because
\begin{equation}
\sum _{x\in\Omega _\ascript}\elbows{\phi |\hat{A}_x|\phi}=\elbows{\phi |\sum_{x\in\Omega _\ascript}\hat{A}_x|\phi}=\elbows{\phi |\hat{I}|\phi}=\elbows{\phi |\phi}=1 .
\end{equation}
Corresponding to the observable $\ascript$ we have its \textit{stochastic operator} $\atilde =\sum _{x\in\Omega _\ascript}x\hat{A}_x$. Then $\atilde$ is an observable operator and the \textit{expectation} (or \textit{average}) of $\ascript$ in the state $\ket{\phi}$ is
\begin{equation}
\elbows{\ascript}_\phi =\elbows{\atilde}_\phi =\elbows{\phi |\atilde|\phi}=\elbows{\phi |\sum _{x\in\Omega _\ascript}x\hat{A}_x|\phi}
   =\sum _{x\in\Omega _\ascript}x\elbows{\hat{A}_x}_\phi .
\end{equation}
Thus, $\elbows{\ascript}_\phi$ is the sum of the outcomes of $\ascript$ times the probabilities that these outcomes occur. If
$\bscript=\brac{\hat{B}_y\colon y\in\Omega _\bscript}$ is another observable, the $\phi$-\textit{correlation} of $\ascript$ and $\bscript$ is
$\rmcor _\phi (\ascript ,\bscript )=\rmcor _\phi (\atilde ,\btilde )$, the $\phi$-\textit{covariance} of $\ascript$ and $\bscript$ is
$\Delta _\phi (\ascript ,\bscript )=\Delta _\phi (\atilde ,\btilde )$ and the $\phi$-\text{variance} of $\ascript$ is
$\Delta _\phi (\ascript )=\Delta _\phi (\atilde\,)$. We conclude that
\begin{align}                % equation (2), equation (3), equation (4) 
\rmcor _\phi (\ascript ,\bscript )&=\sum _{x,y}xy\paren{\elbows{\phi |\hat{A}_x\hat{B}_y|\phi}-\elbows{\hat{A}_x}_\phi\elbows{\hat{B}_y}_\phi}\label{eq2} ,\\
\Delta _\phi (\ascript ,\bscript )&=\sum _{xy}xy \,\rmre\left[\elbows{\phi |\hat{A}_x\hat{B}_y|\phi}-\elbows{\hat{A}_x}_\phi\elbows{\hat{B}_y}_\phi\right]\label{eq3} ,\\
\Delta _\phi (\ascript )&=\sum _{x,y}xy\paren{\elbows{\phi |\hat{A}_x\hat{A}_y|\phi}-\elbows{\hat{A}_x}_\phi\elbows{\hat{A}_y}_\phi}\label{eq4} .
\end{align}
We say that $\ascript ,\bscript$ are $\phi$-\textit{uncorrelated} if $\atilde$, $\btilde$ are $\phi$-uncorrelated. Also, $\ascript$, $\bscript$ are $\phi$-\textit{independent} if
\begin{equation}
\elbows{\phi |\hat{A}_x\hat{B}_y|\phi}=\elbows{\hat{A}_x}_\phi\elbows{\hat{B}_y}_\phi
\end{equation}
for all $x\in\Omega _\ascript$, $y\in\Omega _\bscript$. Of course, if $\ascript$, $\bscript$ are $\phi$-independent, then $\bscript$, $\ascript$ are $\phi$-independent. It follows from
\eqref{eq2} that if $\ascript$, $\bscript$ are $\phi$-independent, then $\ascript$, $\bscript$ are $\phi$-uncorrelated. The converse does not hold because there are examples of uncorrelated random variables that are not independent. Two observables $\ascript$, $\bscript$
\textit{commute} if $\sqbrac{\hat{A}_x,\hat{B}_y}=0$ for all $x\in\Omega _\ascript$, $y\in\Omega _\bscript$. If $\ascript$, $\bscript$ commute then so do $\atilde$, $\btilde$ and the uncertainty principle reduces to
\begin{equation}
\sqbrac{\Delta _\phi (\ascript ,\bscript )}^2=\ab{\rmcor _\phi (\ascript ,\bscript )}^2\le\Delta _\phi (\ascript )\Delta _\phi (\bscript ) .
\end{equation}
We say that $\ascript$, $\bscript$ are \textit{compatible} (or \textit{jointly measurable}) if there exists an observable
$\cscript =\brac{\hat{C}_{xy}\colon x\in\Omega _\ascript ,y\in\Omega _\bscript}$ such that $\hat{A}_x=\sum _{y\in\Omega _\bscript}\hat{C}_{xy}, \hat{B}_y=\sum _{x\in\Omega _\ascript}\hat{C}_{xy}$.
We then call $\cscript$ a \textit{joint observable} for $\ascript$, $\bscript$.
Although $\cscript$ is an observable, it is not real-valued because its outcome space is $\Omega _\ascript\times\Omega _\bscript$.
If $\ascript$, $\bscript$ commute, they are compatible with joint observable $\hat{C}_{xy}=\hat{A}_x\hat{B}_y$. If $\ascript$, $\bscript$ are compatible, they need not commute \cite{Gudder2022}.

We now apply the previous discussion to \emph{entanglement statistics}. Let $H_1$, $H_2$ be finite-dimensional Hilbert spaces for two quantum systems. The combined system is described by the Hilbert space $H_1\otimes H_2$. If $\hat{A}\in\lscript _S(H_1)$ is an observable operator for system~1, then in the combined system this operator is represented by $\hat{A}\otimes \hat{I}_2$ where $\hat{I}_2$ is the identity operator on $H_2$. Similarly, if $\hat{B}\in\lscript _S(H_2)$, then in the combined system $\hat{B}$ is represented by $\hat{I}_1\otimes \hat{B}$. Since $\hat{A}\otimes \hat{I}_2$ and $\hat{I}_1\otimes \hat{B}$ commute, they are jointly measured by the observable
\begin{equation}
(\hat{A}\otimes \hat{I}_2)(\hat{I}_1\otimes \hat{B})=\hat{A}\otimes \hat{B}
\end{equation}
in the combined system. Similarly, if $\ascript =\brac{\hat{A}_x\colon x\in\Omega _\ascript}$, $\bscript =\brac{\hat{B}_y\colon y\in\Omega _\bscript}$ are real-valued observables on
$H_1$, $H_2$, respectively, then
$$
\ascript\otimes \hat{I}_2=\brac{\hat{A}_x\otimes \hat{I}_2\colon x\in\Omega _\ascript}, \hat{I}_1\otimes\bscript =\brac{\hat{I}_1\otimes \hat{B}_y\colon y\in\Omega _\bscript}
$$
are real-valued observables on $H_1\otimes H_2$.
These observables commute and have joint observable
\begin{equation}
\cscript =\ascript\otimes\bscript =\brac{\hat{C}_{xy}=\hat{A}_x\otimes \hat{B}_y\colon (x,y)\in\Omega _\ascript\times\Omega _\bscript} .
\end{equation}
Although $\cscript$ is an observable on $H_1\otimes H_2$, it is not real-valued because $\Omega _\cscript =\Omega _\ascript\times\Omega _\bscript$.
The corresponding stochastic operators become
\begin{align}
\widetilde{(\ascript\otimes \hat{I}_2)} &=\atilde\otimes \hat{I}_2=\sum _{x\in\Omega _\ascript}x\hat{A}_x\otimes \hat{I}_2 ,\\
\widetilde{(\hat{I}_1\otimes\bscript )} &= \hat{I}_1\otimes\btilde =\sum _{y\in\Omega _\bscript}y\hat{I}_1\otimes  \hat{B}_y .
\end{align}
We define the \textit{stochastic operator} for $\cscript$ to be
\begin{equation}
\ctilde =\widetilde{(\ascript\otimes \hat{I}_2)} \widetilde{(\hat{I}_1\otimes\bscript )} =\atilde\otimes\btilde =\sum _{x,y}xy\hat{A}_x\otimes \hat{B}_y .
\end{equation}

\begin{theorem}    % Theorem 1
\label{thm1}
The following statements are equivalent:\\
{\rm{(i)}}\enspace The vector state $\ket{\alpha}\in H_1\otimes H_2$ is separable.\\
{\rm{(ii)}}\enspace $\hat{A}\otimes \hat{I}_2$, $\hat{I}_1\otimes \hat{B}$ are $\alpha$-uncorrelated for all $\hat{A}\in\lscript _S(H_1)$, $\hat{B}\in\lscript _S(H_2)$.\\
{\rm{(iii)}}\enspace $\ascript\otimes \hat{I}_2$, $\hat{I}_1\otimes\bscript$ are $\alpha$-independent for all observables $\ascript$ on $H_1$ and $\bscript$ on $H_2$.
\end{theorem}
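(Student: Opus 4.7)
The plan is to prove the cycle (i) $\Rightarrow$ (iii) $\Rightarrow$ (ii) $\Rightarrow$ (i). The first two implications are routine consequences of the definitions, while the substantive step is (ii) $\Rightarrow$ (i), which I would attack by feeding the hypothesis a family of test operators tailored to the Schmidt decomposition of $\ket{\alpha}$.

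For (i) $\Rightarrow$ (iii), I would factor $\ket{\alpha}=\ket{\psi_1}\otimes\ket{\psi_2}$ and compute directly that $\elbows{\alpha|(\hat{A}_x\otimes \hat{I}_2)(\hat{I}_1\otimes \hat{B}_y)|\alpha}=\elbows{\psi_1|\hat{A}_x|\psi_1}\elbows{\psi_2|\hat{B}_y|\psi_2}$, which equals $\elbows{\hat{A}_x\otimes \hat{I}_2}_\alpha\elbows{\hat{I}_1\otimes \hat{B}_y}_\alpha$ by the same factorization; this is precisely $\alpha$-independence. For (iii) $\Rightarrow$ (ii), given any self-adjoint $\hat{A}\in\lscript_S(H_1)$ and $\hat{B}\in\lscript_S(H_2)$, the spectral theorem furnishes real-valued observables $\ascript$, $\bscript$ with stochastic operators $\atilde=\hat{A}$ and $\btilde=\hat{B}$, so applying the $\alpha$-independence of (iii) termwise inside formula \eqref{eq2} kills every summand and gives $\rmcor_\alpha(\hat{A}\otimes \hat{I}_2,\hat{I}_1\otimes \hat{B})=0$.

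The heart of the argument is (ii) $\Rightarrow$ (i). Given the Schmidt decomposition $\ket{\alpha}=\sum_k\lambda_k\ket{u_k}\otimes\ket{v_k}$, I would choose rank-one projectors $\hat{A}=\ket{u_i}\bra{u_i}$ and $\hat{B}=\ket{v_j}\bra{v_j}$ as test operators. Orthonormality of the Schmidt bases gives $\elbows{\alpha|\hat{A}\otimes \hat{B}|\alpha}=\lambda_i^2\delta_{ij}$, $\elbows{\hat{A}\otimes \hat{I}_2}_\alpha=\lambda_i^2$, and $\elbows{\hat{I}_1\otimes \hat{B}}_\alpha=\lambda_j^2$, so the uncorrelation hypothesis forces $\lambda_i^2\delta_{ij}=\lambda_i^2\lambda_j^2$ for all $i,j$. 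Setting $i=j$ yields $\lambda_i^2\in\{0,1\}$, and the normalization $\sum_k\lambda_k^2=1$ then pins exactly one coefficient at unity, so $\ket{\alpha}$ has Schmidt rank one and is separable. The main obstacle is recognizing that rank-one projectors onto the Schmidt basis vectors are the decisive probes; once that choice is identified, the algebra reduces to a one-line computation.
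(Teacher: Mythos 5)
Your proposal is correct, and its decisive computation is the same as the paper's: in both arguments separability is forced by probing with rank-one projectors onto the Schmidt vectors, which yields $\lambda_i^4=\lambda_i^2$ and, together with normalization, Schmidt rank one. The only genuine difference is the direction in which you close the cycle: you prove (i)$\Rightarrow$(iii)$\Rightarrow$(ii)$\Rightarrow$(i), while the paper proves (i)$\Rightarrow$(ii)$\Rightarrow$(iii)$\Rightarrow$(i). As a result, your ``easy'' step (iii)$\Rightarrow$(ii) requires the spectral theorem to realize an arbitrary $\hat{A}\in\lscript_S(H_1)$ as the stochastic operator $\atilde$ of a real-valued observable and then a termwise summation of the independence relations in \eqref{eq2}, whereas the paper's corresponding step (ii)$\Rightarrow$(iii) is immediate because each effect $\hat{A}_x$ is itself a self-adjoint operator to which hypothesis (ii) applies directly; conversely, your hard step invokes hypothesis (ii) with the Schmidt projectors as observable operators, while the paper invokes hypothesis (iii) with the same projectors taken as effects $\hat{A}_x=\ket{\phi_1}\bra{\phi_1}$, $\hat{B}_y=\ket{\psi_1}\bra{\psi_1}$ --- the probe is identical. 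A small point in your favor: by testing all pairs $(i,j)$ and then using $\sum_k\lambda_k^2=1$ you conclude directly that exactly one Schmidt coefficient equals one, avoiding the paper's extra appeal to the descending ordering of the Schmidt coefficients to rule out $\lambda_1=0$.
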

\begin{proof}
(i)$\Rightarrow$(ii)\enspace Suppose $\ket{\alpha}\in H_1\otimes H_2$ is separable with $\ket{\alpha} =\ket{\phi}\otimes\ket{\psi}$. We then have
\begin{align*}
\elbows{\alpha |(\hat{A}\otimes \hat{I}_2)(\hat{I}_1\otimes \hat{B})|\alpha}&=\elbows{\phi|\otimes\langle\psi |\hat{A}\otimes \hat{B}|\phi\rangle\otimes|\psi}
=\langle\phi|\otimes\langle \psi |\left(\hat{A}|\phi\rangle\otimes \hat{B}|\psi\rangle\right)\\
&=\elbows{\phi |\hat{A}|\phi}\elbows{\psi |\hat{B}|\psi}=\elbows{\alpha |\hat{A}\otimes \hat{I}_2|\alpha}\elbows{\alpha |\hat{I}_1\otimes \hat{B}|\alpha} .
\end{align*}
It follows that $\hat{A}\otimes \hat{I}_2$, $\hat{I}_1\otimes \hat{B}$ are $\alpha$-uncorrelated.

(ii)$\Rightarrow$(iii)\enspace Suppose $\hat{A}\otimes \hat{I}_2$, $\hat{I}_1\otimes \hat{B}$ are $\alpha$-uncorrelated for all $\hat{A}\in\lscript _S(H)$, $\hat{B}\in\lscript _S(H)$. Since
$\hat{A}_x\in\lscript _S(H)$, $\hat{B}_y\in\lscript _S(H)$ we have that $\hat{A}_x\otimes \hat{I}_2$, $\hat{I}_1\otimes \hat{B}_y$ are $\alpha$-uncorrelated for all $x\in\Omega _\ascript$,
$y\in\Omega _\bscript$. Hence,
$$
\elbows{\alpha |(\hat{A}_x\otimes \hat{I}_2)(\hat{I}_1\otimes \hat{B}_y)|\alpha}=\elbows{\hat{A}\otimes \hat{I}_2}_\alpha\elbows{\hat{I}_1\otimes \hat{B}}_\alpha
$$
so the observables $\ascript\otimes \hat{I}_2$, $\hat{I}_1\otimes\bscript$ are $\alpha$-independent.

(iii)$\Rightarrow$(i)\enspace Suppose $\ascript\otimes \hat{I}_2$, $\hat{I}_1\otimes\bscript$ are $\alpha$-independent for all observables $\ascript$ on $H_1$ and $\bscript$ on
$H_2$. Let $\ket{\alpha}$ have Schmidt decomposition $\ket{\alpha} =\sum\lambda _i\ket{\phi}_i\otimes\ket{\psi}_i$. Since
\begin{equation}
\elbows{\alpha | \hat{A}_x\otimes \hat{B}_y |\alpha}=\elbows{\hat{A}_x\otimes \hat{I}_2}_\alpha\elbows{\hat{I}_1\otimes \hat{B}_y}_\alpha
\end{equation}
for all $x\in\Omega _\ascript$, $y\in\Omega _\beta$ and
\begin{align}            % equation 5
\label{eq5}
\elbows{\hat{A}_x\otimes \hat{I}_2}_\alpha&=\elbows{\alpha |\hat{A}_x\otimes \hat{I}_2|\alpha}
   =\sum _i\lambda _i\langle\phi _i|\otimes\langle\psi_i|(\hat{A}_x\otimes \hat{I}_2)\sum _j\lambda _j|\phi _j\rangle\otimes|\psi _j\rangle\nonumber\\
   &=\sum _{i,j}\lambda _i\lambda _j\elbows{\phi _i|\otimes\langle\psi _i|(\hat{A}_x\otimes \hat{I}_2)|\phi _j\rangle\otimes|\psi _j}
   =\sum _{i,j}\lambda _i\lambda _j\elbows{\phi _i|\hat{A}_x|\phi _j}\elbows{\psi _i|\psi _j}\nonumber\\
   &=\sum _{i,j}\lambda _i\lambda _j\elbows{\phi _i|\hat{A}_x|\phi _j}\delta_{ij}=\sum _i\lambda _i^2\elbows{\phi _i|\hat{A}_x|\phi _i}
\end{align}
we conclude that
\begin{align}            % equation 6
\label{eq6}
\sum _i\lambda _i^2\elbows{\phi _i|\hat{A}_x|\phi _i}&\sum _j\lambda _j^2\elbows{\psi _j|\hat{B}_y|\psi _j}\notag\\
   &=\elbows{\alpha |\hat{A}_x\otimes \hat{B}_y|\alpha}
   =\sum _i\lambda _i\langle\phi _i|\otimes\langle\psi _i|(\hat{A}_x\otimes \hat{B}_y ) \sum _j\lambda _j|\phi _j\rangle\otimes|\psi _j\rangle\nonumber\\
   &=\sum _{i,j}\lambda _i\lambda _j\langle\phi _i|\otimes\langle\psi _i|\left(\hat{A}_x|\phi _j\rangle\otimes \hat{B}_y|\psi _j\rangle\right)\nonumber\\
   &=\sum _{i,j}\lambda _i\lambda _j\elbows{\phi _i|\hat{A}_x|\phi _j}\elbows{\psi _i|\hat{B}_y|\psi _j} .
\end{align}
Since \eqref{eq6} holds for all $\ascript$, $\bscript$, we can let $\hat{A}_x=\ket{\phi _1}\bra{\phi_1}$, $\hat{B}_y=\ket{\psi _1}\bra{\psi _1}$ to obtain $\lambda _1^4=\lambda _1^2$.
Although $\lambda _1$ could be either 0 or 1 to satisfy the latter relation, we can also use the assumption that the Schmidt coefficients are sorted in descending order to fix $\lambda _1=1$.
Therefore, $\ket{\alpha} =\ket{\phi _1}\otimes\ket{\psi _1}$ is separable.
\end{proof}

If $\ket{\alpha}$ is entangled with Schmidt decomposition $\ket{\alpha} =\sum\lambda _i\ket{\phi _i}\otimes\ket{\psi _i}$, we have seen in \eqref{eq6} that for all $\hat{A}\in\lscript _S(H)$,
$\hat{B}\in\lscript _S(H)$ we have
\begin{equation}                % equation (7)
\label{eq7}
\elbows{\hat{A}\otimes \hat{B}}_\alpha =\sum _{i,j}\lambda _i\lambda _j\elbows{\phi _i|\hat{A}|\phi _j}\elbows{\psi _i|\hat{B}|\psi _j} .
\end{equation}
If $\ket{\alpha} =\ket{\phi}\otimes\ket{\psi}$ is separable, then
\begin{equation}
\elbows{\hat{A}\otimes \hat{B}}_\alpha =\elbows{\hat{A}}_\phi\elbows{\hat{B}}_\psi =\elbows{\hat{A}\otimes \hat{I}_2}_\alpha\elbows{\hat{I}_1\otimes \hat{B}}_\alpha .
\label{eq:94}
\end{equation}
At the other extreme, if the normalized entanglement number is $\tilde{e}(\alpha )=1$, then $\ket{\alpha}$~is~maximally entangled with Schmidt decomposition
$\ket{\alpha} =\tfrac{1}{\sqrt{n\,}}\sum\limits _{i=1}^n\ket{\phi _i}\otimes\ket{\psi _i}$ where
$n=\min\paren{\dim (H_1),\dim (H_2)}$.
In this case, we have
\begin{equation}
\elbows{\hat{A}\otimes \hat{B}}_\alpha =\frac{1}{n}\sum _{i,j}\elbows{\phi _i|\hat{A}|\phi _j}\elbows{\psi _i|\hat{B}|\psi _j} .
\end{equation}

For a general state $\ket{\alpha}\in H_1\otimes H_2$ the $\alpha$-variance of $\hat{A}\otimes \hat{B}$ becomes
\begin{equation}
\Delta _\alpha (\hat{A}\otimes \hat{B})=\elbows{\alpha |\hat{A}^2\otimes \hat{B}^2|\alpha}-\elbows{\hat{A}\otimes \hat{B}}_\alpha ^2 .
\end{equation}
If $\ket{\alpha} =\ket{\phi}\otimes\ket{\psi}$ is separable, we obtain
\begin{align}
\Delta _\alpha (\hat{A}\otimes \hat{B})&=\elbows{\phi|\otimes\langle\psi | \hat{A}^2\otimes \hat{B}^2 |\phi\rangle\otimes|\psi}-\elbows{\hat{A}}_\phi ^2\elbows{\hat{B}}_\psi ^2 \nonumber\\
   &=\langle\phi|\otimes\langle\psi | \left(\hat{A}^2|\phi\rangle\otimes \hat{B}^2|\psi\rangle\right) -\paren{\elbows{\hat{A}}_\phi\elbows{\hat{B}}_\psi}^2 \nonumber\\
   &=\elbows{\hat{A}^2}_\phi\elbows{\hat{B}^2}_\psi -\paren{\elbows{\hat{A}}_\phi\elbows{\hat{B}}_\psi}^2 .
\end{align}
If $\ket{\alpha}$ is entangled with Schmidt decomposition $\ket{\alpha} =\sum\lambda _i\ket{\phi _i}\otimes\ket{\psi _i}$, we obtain from \eqref{eq7} that
\begin{align}            % equation 8
\label{eq8}
\Delta _\alpha (\hat{A}\otimes \hat{B})&=\elbows{\hat{A}^2\otimes \hat{B}^2}_\alpha -\elbows{\hat{A}\otimes \hat{B}}_\alpha ^2 \nonumber\\
   &=\sum _{i,j}\lambda _i\lambda _j\elbows{\phi _i|\hat{A}^2|\phi _j}\elbows{\psi _i|\hat{B}^2|\psi _j}
   -\paren{\sum _{i,j}\lambda _i\lambda _j\elbows{\phi _i|\hat{A}|\phi _j}\elbows{\psi _i|\hat{B}|\psi _j}}^2 .
\end{align}
If $\ket{\alpha}$ is maximally entangled, this reduces to
\begin{align}
\Delta _\alpha (\hat{A}\otimes \hat{B})
    =\frac{1}{n} \sum _{i,j}\elbows{\phi _i|\hat{A}^2|\phi _j}\elbows{\psi _i|\hat{B}^2|\psi _j}
    -\frac{1}{n^2}\paren{\sum _{i,j}\elbows{\phi _i|\hat{A}|\phi _j}\elbows{\psi _i|\hat{B}|\psi _j}}^2 .
\end{align}

As expected, in the separable case we have $\elbows{\hat{A}\otimes \hat{I}_2}_\alpha =\elbows{\hat{A}}_\phi$. However, when $\ket{\alpha}$ is entangled, we have by \eqref{eq7} that
\begin{equation}
\elbows{\hat{A}\otimes \hat{I}_2}_\alpha =\sum _i\lambda _i^2\elbows{\hat{A}}_{\phi _i}
\end{equation}
and when $\ket{\alpha}$ is maximally entangled, $\elbows{\hat{A}\otimes \hat{I}_2}_\alpha =\tfrac{1}{n}\sum\limits _i\elbows{\hat{A}}_{\phi _i}$. Similarly, when $\ket{\alpha}$ is separable, we have
\begin{equation}
\Delta _\alpha (\hat{A}\otimes \hat{I}_2)=\Delta _\phi (\hat{A})
\end{equation}
and when $\ket{\alpha}$ is entangled, we have by \eqref{eq8} that
\begin{equation}
\Delta _\alpha (\hat{A}\otimes \hat{I}_2)=\sum _i\lambda _i^2\elbows{\hat{A}^2}_{\phi _i}-\paren{\sum\lambda _i^2\elbows{\hat{A}}_{\phi _i}}^2 .
\end{equation}
If $\ket{\alpha}$ is maximally entangled, this becomes
\begin{equation}
\Delta _\alpha (\hat{A}\otimes \hat{I}_2)=\frac{1}{n}\sum _i\elbows{\hat{A}^2}_{\phi _i}-\frac{1}{n^2}\paren{\sum\elbows{\hat{A}}_{\phi _i}}^2 .
\end{equation}
Similar equations hold for $\hat{I}_1\otimes \hat{B}$.

For separable $\ket{\alpha}\in H_1\otimes H_2$, from \eqref{eq:94} it follows that $\rmcor _\alpha (\hat{A}\otimes \hat{I}_2,\hat{I}_1\otimes \hat{B})=0$, hence $\hat{A}\otimes \hat{I}_2$, $\hat{I}_1\otimes \hat{B}$ are uncorrelated.
Also, $\sqbrac{\hat{A}\otimes \hat{I}_2,\hat{I}_1\otimes \hat{B}}=0$ so no information is given by the uncertainty principle.

For the observable $\cscript =\brac{\hat{A}_x\otimes \hat{B}_y\colon (x,y)\in\Omega _\ascript\times\Omega _\bscript}$ with stochastic operator $\ctilde =\sum _{x,y}xy\hat{A}_x\otimes \hat{B}_y$ we have
\begin{equation}
\elbows{\cscript}_\alpha =\elbows{\ctilde\,}_\alpha =\elbows{\alpha |\sum _{x,y}xy\hat{A}_x\otimes \hat{B}_y |\alpha}=\sum _{x,y}xy\elbows{\alpha |\hat{A}_x\otimes \hat{B}_y|\alpha} .
\end{equation}
If $\ket{\alpha} =\sum\lambda _i\ket{\phi _i}\otimes\ket{\psi _i}$ is entangled, we obtain from \eqref{eq7} that
\begin{equation}
\elbows{\cscript}_\alpha =\elbows{\atilde\otimes\btilde\,}_\alpha
   =\sum _{i,j}\lambda _i\lambda _j\elbows{\phi _i|\atilde|\phi _j}\elbows{\psi _i|\btilde|\psi _j} .
\end{equation}
As before, when $\ket{\alpha} =\ket{\phi}\otimes\ket{\psi}$ is separable this becomes $\elbows{\cscript}_\alpha =\elbows{\ascript}_\phi\elbows{\bscript}_\psi$ and when $\ket{\alpha}$ is maximally entangled
\begin{equation}
\elbows{\cscript}_\alpha =\frac{1}{n}\sum _{i,j}\elbows{\phi _i|\atilde|\phi _j}\elbows{\psi _i|\btilde|\psi _j} .
\end{equation}

So far, we did not consider interactions because $\sqbrac{\hat{A}\otimes \hat{I}_2,\hat{I}_1\otimes \hat{B}}=0$. We now consider interactions on $H_1\otimes H_2$. Let $\hat{A},\hat{C}\in\lscript _S(H_1)$,
$\hat{B},\hat{D}\in\lscript _S(H_2)$ so that $\hat{A}\otimes \hat{B},\hat{C}\otimes \hat{D}\in\lscript _S(H_1\otimes H_2)$.
The interaction statistics are given by
\begin{align}            % equation 9
\label{eq9}
\rmcor _\alpha (\hat{A}\otimes \hat{B},\hat{C}\otimes \hat{D})&=\elbows{\alpha |(\hat{A}\otimes \hat{B})(\hat{C}\otimes \hat{D})|\alpha}
-\elbows{\alpha |\hat{A}\otimes \hat{B}|\alpha}\elbows{\alpha |\hat{C}\otimes \hat{D}|\alpha}\notag\\
    &=\elbows{\alpha |\hat{A}\hat{C}\otimes \hat{B}\hat{D}|\alpha}-\elbows{\hat{A}\otimes \hat{B}}_\alpha\elbows{\hat{C}\otimes \hat{D}}_\alpha .
\end{align}
If $\ket{\alpha}$ has Schmidt decomposition $\ket{\alpha} =\sum _i\lambda _i\ket{\phi _i}\otimes\ket{\psi _i}$, then
\begin{align}
\rmcor _\alpha&(\hat{A}\otimes \hat{B},\hat{C}\otimes \hat{D})=\sum _i\lambda _i\langle\phi _i|\otimes\langle\psi _i|(\hat{A}\hat{C}\otimes \hat{B}\hat{D})\sum _j\lambda _j|\phi _j\rangle\otimes|\psi _j\rangle \nonumber\\
   &-\paren{\sum _{i,j}\lambda _i\lambda _j\elbows{\phi _i|\hat{A}|\phi _j}\elbows{\psi _i|\hat{B}|\psi _j}}\paren{\sum _{i,j}\lambda _i\lambda _j\elbows{\phi _i|\hat{C}|\phi _j}\elbows{\psi _i|\hat{D}|\psi _j}} \nonumber\\
   &=\sum _{i,j}\lambda _i\lambda _j\elbows{\phi _i|\hat{A}\hat{C}|\phi _j}\elbows{\psi _i|\hat{B}\hat{D}|\psi _j} \nonumber\\
   &\quad -\sum _{i,j,r,s}\lambda _i\lambda_j\lambda _r\lambda _s\elbows{\phi _i| \hat{A}|\phi _j}\elbows{\psi _i|\hat{B}|\psi _j}\elbows{\phi _r|\hat{C}|\phi _s}\elbows{\psi _r|\hat{D}|\psi _s} .
\end{align}
In the two extreme cases, when $\ket{\alpha} =\ket{\phi}\otimes\ket{\psi}$ is separable, we obtain
\begin{equation}
\rmcor _\alpha (\hat{A}\otimes \hat{B},\hat{C}\otimes \hat{D})=\elbows{\phi |\hat{A}\hat{C}|\phi}\elbows{\psi |\hat{B}\hat{D}|\psi}-\elbows{\hat{A}}_\phi\elbows{\hat{C}}_\phi\elbows{\hat{B}}_\psi\elbows{\hat{D}}_\psi
\end{equation}
and when $\ket{\alpha}$ is maximally entangled, we have
\begin{align}
\rmcor _\alpha (\hat{A}\otimes \hat{B},\hat{C}\otimes \hat{D})&=\frac{1}{n}\sum _{i,j}\elbows{\phi _i|\hat{A}\hat{C}|\phi _j}\elbows{\psi _i|\hat{B}\hat{D}|\psi _j} \nonumber\\
   &\quad -\frac{1}{n^2}\sum _{i,j,r,s}\elbows{\phi _i|\hat{A}|\phi _j}\elbows{\psi _i|\hat{B}|\psi _j}\elbows{\phi _r|\hat{C}|\phi _s}\elbows{\psi _r|\hat{D}|\psi _s} .
\end{align}
A particularly simple case is $\hat{B}=\hat{D}=\hat{I}_2$, we have observable operators $\hat{A}\otimes \hat{I}_2$, $\hat{C}\otimes \hat{I}_2$ and we still have interaction if $\sqbrac{\hat{A},\hat{C}}\ne 0$. In this case, if
$\ket{\alpha} =\sum\lambda _i\ket{\phi _i}\otimes\ket{\psi _i}$, then
\begin{equation}
\rmcor _\alpha (\hat{A}\otimes \hat{I}_2,\hat{C}\otimes \hat{I}_2)=\sum _i\lambda _i^2\elbows{\phi _i|\hat{A}\hat{C}|\phi _i}-\sum _{i,r}\lambda _i^2\lambda _r^2\elbows{\hat{A}}_{\phi _i}\elbows{\hat{C}}_{\phi _r} .
\end{equation}
For the two extreme cases, when $\ket{\alpha} =\ket{\phi}\otimes\ket{\psi}$ is separable, we have
\begin{equation}
\rmcor _\alpha (\hat{A}\otimes \hat{I}_2,\hat{C}\otimes \hat{I}_2)=\elbows{\phi |\hat{A}\hat{C}|\phi}-\elbows{\hat{A}}_\phi\elbows{\hat{C}}_\phi
\end{equation}
and when $\ket{\alpha}$ is maximally entangled, we obtain
\begin{equation}
\rmcor _\alpha (\hat{A}\otimes \hat{I}_2, \hat{C}\otimes \hat{I}_2)=\frac{1}{n}\sum _i\elbows{\phi _i|\hat{A}\hat{C}|\phi _i}-\frac{1}{n^2}\sum _{i,r}\elbows{\hat{A}}_{\phi _i}\elbows{\hat{C}}_{\phi _r} .
\end{equation}
It is straightforward to continue this discussion for interacting observables, $\hat{A}_x\otimes \hat{B}_y$, $\hat{C}_u\otimes \hat{D}_v$.

\section{Concluding remarks}

In this work, we have investigated four quantum entanglement
measures that are based on Schmidt decomposition. After normalization
of the measures, we have shown that partial order is
possible between some entanglement measures. In particular, we
have rigorously proved that the concurrence forms an upper bound on
the tangle and entanglement robustness, whereas the Schmidt number
forms a lower bound. The existing partial order was then utilized
to introduce the concept of relative sensitivity to quantum entanglement,
and with a minimal quantum toy model we have demonstrated
how concurrence can be used to sharply demarcate separable states
and Schmidt number to sharply demarcate maximally entangled states.

Each of the four entanglement measures could be computed using an explicit formula that is based on the Schmidt coefficients.
Performing singular value decomposition, however, is a computationally expensive task \cite{Ford2014}. Fortunately, a mathematical workaround proposed by Gudder \cite{Gudder2020a,Gudder2020b} could be utilized for those entanglement measures that require the sum of the fourth powers of the Schmidt coefficients, $\sum_i \lambda_i^4$. In the latter case, one can just compute the trace $\textrm{Tr}(\hat{C}\hat{C}^\dagger)^2$ obtained from the complex coefficient matrix $\hat{C}$, whose reshaping gives the bipartite quantum state vector in the basis $|i\rangle\otimes|j\rangle$, namely, $|\psi\rangle=\textrm{res} (\hat{C} )$ \cite{Miszczak2011}.
This allows for fast evaluation of the concurrence and the Schmidt number, which provide the upper and lower bounds on the amount of available quantum entanglement, respectively.

State-dependent entanglement statistics provides an alternative, but equivalent, theoretical characterization of separable/entangled states in terms of absence/presence of measurable correlations between all/some local quantum observables. Noteworthy, Schmidt decomposition features prominently in the calculation of expectation values, variances, covariances and correlations between quantum observables, which obey the uncertainty principle.

Quantum entanglement is a precious physical resource that allows quantum devices to outperform their classical counterparts in terms of speed and efficiency.
Therefore, the presented theorems with regard to the four entanglement measures: concurrence, tangle, entanglement robustness and Schmidt number, could be useful in practical quantum applications that require careful monitoring and utilization of the available quantum resources.\\

\end{document}